\newtheorem{lemma}{Lemma}
\newtheorem{theorem}{Theorem}
\newcommand{\MST}{{\rm MST}}
\newcommand{\geo}{{\rm geod}}
\newcommand{\conv}{{\rm conv}}
\newcommand{\diam}{{\rm diam}}
\title{Minimum Weight Connectivity Augmentation\\ for Planar Straight-Line Graphs\thanks{Research on this paper was supported in part by the NSF awards CCF-1422311 and CCF-1423615.}}
\author{Hugo A. Akitaya\thanks{Department of Computer Science, Tufts University, Medford, MA}
\and R. Inkulu\thanks{Department of Computer Science \& Engineering, Indian Institute of Technology Guwahati, India}
\and Torrie L. Nichols\thanks{Department of Mathematics, California State University Northridge, Los Angeles, CA}
\and Diane L. Souvaine\footnotemark[2]
\and Csaba D. T\'oth\footnotemark[2] \footnotemark[4]
\and Charles R. Winston\footnotemark[2]
}
\newtheorem{remark}{Lemma}
\newtheorem{corollary}{Corollary}
\begin{document}
\maketitle

\begin{abstract}
We consider edge insertion and deletion operations that increase the connectivity of a given planar straight-line graph (PSLG), while minimizing the total edge length of the output. We show that every connected PSLG $G=(V,E)$ in general position can be augmented to a 2-connected PSLG $(V,E\cup E^+)$ by adding new edges of total Euclidean length $\|E^+\|\leq 2\|E\|$, and this bound is the best possible. An optimal edge set $E^+$ can be computed in $O(|V|^4)$ time; however the problem becomes NP-hard when $G$ is disconnected.
Further, there is a sequence of edge insertions and deletions that transforms a connected PSLG $G=(V,E)$ into a planar straight-line cycle $G'=(V,E')$ such that $\|E'\|\leq 2\|\MST(V)\|$, and the graph remains connected with edge length below $\|E\|+\|\MST(V)\|$ at all stages.  These bounds are the best possible.
\end{abstract}

\section{Introduction}\label{sec:intro}

Connectivity augmentation is a classical problem in combinatorial optimization. Given a graph $G=(V,E)$ and a parameter $\tau\in \mathbb{N}$, add a set of new edges $E^+$ of minimum cardinality or weight such that the augmented graph $G'=(V,E\cup E^+)$ is $\tau$-connected (resp., $\tau$-edge-connected). Efficient algorithms are known for both connectivity and edge-connectivity augmentation over abstract graphs and constant $\tau$~\cite{Fra11,MV15,Veg11}.
In this paper we consider weighted connectivity augmentation for planar straight-line graphs (PSLGs). The vertices are points in Euclidean plane, the edges are noncrossing line segments between the corresponding vertices, and the weight of an edge is its Euclidean length.


The edge- and node-connectivity of a planar graph is at most 5 by Euler's theorem.
Further, not every PSLG can be augmented to a 3-connected (resp., 4-edge-connected) PSLG; see~\cite{HT13} for feasibility conditions. Finding the minimum \emph{number} of edges to augment a given PSLG to $\tau$-connectivity or $\tau$-edge-connectivity is NP-complete~\cite{RW12} for $2\leq \tau\leq 5$; the reduction requires the input graph $G$ to be disconnected (the NP-hardness claim for connected input~\cite[Corollary~2]{RW12} turned out to be flawed).
Worst case bounds are known for the most important cases: Every PSLG $G$ with $n$ vertices can be augmented to 2-edge-connectivity with at most $\lfloor (4n-4) / 3 \rfloor$ edges~\cite{ACL+15}; and at most $\lfloor (2n-2) / 3 \rfloor$ new edges if $G$ is already connected~\cite{Tot12}. At most $b-1$ suffice for 2-connectivity, where $b$ is the number of 2-blocks in $G$~\cite{AGF+08}. All these bounds are the best possible.

\smallskip\noindent{\bf
Our results.}
We show that every connected PSLG $G=(V,E)$ with $n$ vertices in general position can be augmented to a 2-connected PSLG $G'=(V,E\cup E^+)$ by adding new edges of total Euclidean length $\|E^+\|\leq 2\|E\|$ (Section~\ref{sec:augment}). A set $E^+$ that minimizes $\|E^+\|$ can be computed in $O(|V|^4)$ time (Section~\ref{sec:DP}); however the problem becomes NP-hard when $G$ is disconnected (Section~\ref{sec:hardness}). Further, there is a sequence of edge insertions and deletions that transforms a connected PSLG $G$ into a planar straight-line cycle $G'=(V,E')$ such that the graph at all stages remains connected with the sum of Euclidean edge lengths below $\|E\|+\|\MST(V)\|$, and, at termination, $\|E'\|\leq 2\|\MST(V)\|$, where $\MST(V)$ denotes the Euclidean minimum spanning tree of $V$; these bounds are the best possible (Section~\ref{sec:dynamic}). Our proof is constructive, and yields a polynomial-time algorithm for computing such sequences.

\smallskip\noindent{\bf
Related previous work.}
Biconnectivity augmentation over \emph{planar} graphs (where no embedding of $G$ is given) is also NP-complete~\cite{KB91}. Over planar graphs \emph{with fixed combinatorial embedding}, biconnectivity augmentation remains NP-hard for disconnected graphs; but there is a near-linear time algorithm when the input is already connected~\cite{GMZ09}.
Frederickson and Ja'Ja'~\cite{FJ81} show that the \emph{weighted} augmentation of a tree to be 2-connected or 2-edge-connected (without planarity constraint) is NP-complete even if the weights are restricted to $\{1,2\}$.
The problem is APX-hard, and  breaking the approximation ratio of 2 is a major open problem~\cite{KZ16}. In the \emph{geometric} setting,
we show (in Section~\ref{sec:DP}) that the minimum-weight augmentation of a planar straight-line tree to a 2-connected (2-edge-connected) PSLGs  can be computed efficiently.

The length of the edges in planar connectivity augmentation was studied only recently in the context of wireless networks. Given a PLSG $G = (V,E)$ where the vertices induce a 2-edge-connected unit disk graph, Dobrev et al.~\cite{DKK+12} compute a 2-edge-connected PSLG by adding edges of length at most 2. Kranakis et al.~\cite{KKM+12} studied the combined problem of adding the minimum \emph{number} of edges of \emph{bounded length}: A 2-edge-connected augmentation is possible such that $|E^+|$ is at most the number of bridges in $G$ and $\max_{e'\in E^+} \|e'\|\leq 3\max_{e\in E}\|e\|$. However, finding the minimum number of new edges of bounded length is NP-hard.

\section{Preliminaries}\label{sec:prelim}

Let $G=(V,E)$ be a planar straight-line graph (PSLG), where $V$ is a set of $n$ points in the plane,  no three of which are collinear, and $E$ is a set of open line segments between pairs of points in $V$. The length of an edge $uv$, denoted $\|uv\|$, is the Euclidean distance between $u$ and $v$; the total length of the edges is $\|E\|=\sum_{e\in E} \|e\|$. Denote by $F$ the set of faces of $G$. The \emph{faces} of $G$ are the connected components of the complement of all vertices and edges of $G$, that is, $\mathbb{R}^2\setminus (V\cup \bigcup_{e\in E}e)$. If $G$ is connected, then every bounded face is simply connected.

A \emph{walk} is an alternating sequence of points (vertices) and line segments (edges) whose consecutive elements are incident, and hence it is uniquely described by the sequence of its vertices $w=(p_0,\ldots ,p_t)$. A walk is \emph{closed} if $p_0=p_t$. A walk is called a \emph{path} if no vertex appears more than once. Every face $f\in F$ defines a closed walk $(p_0,\ldots ,p_t)$ that contain all edges on the boundary of $F$, called \emph{facial walk}, where every edge $p_{i-1}p_i$ is incident to the face $f$, and consecutive edges in the path, $p_{i-1}p_i$ and $p_ip_{i+1}$, are also consecutive in the counterclockwise rotation of all edges incident to $p_i$ (see Fig.~\ref{fig:2}(a)). Note that every edge $e\in E$ occurs twice in the facial walks of the faces of $G$: A \emph{cut vertex} (resp., \emph{bridge}) occurs twice in some facial walk.

A walk $p=(p_0,\ldots ,p_t)$ is \emph{convex} if $0<\angle p_{i-1}p_ip_{i+1} <\pi$ for $i=1,\ldots , t-1$ (for closed walks, $i=1,\ldots , t$), where $\angle p_{i-1}p_ip_{i+1}$ is the measure of the minimum counterclockwise angle that rotates the ray $\overrightarrow{p_ip_{i-1}}$ into $\overrightarrow{p_ip_{i+1}}$.
A vertex of $G$ is called \emph{convex} if $0<\angle p_{i-1}p_ip_{i+1} <\pi$ for each pair $p_{i-1}p_i$ and $p_ip_{i+1}$ of consecutive edges in the counterclockwise rotation of edges incident to $p_i$; and \emph{reflex} otherwise.
A convex walk $p=(p_0,\ldots , p_t)$ is \emph{safe} if not all of its vertices are collinear, and the vertices lie on the boundary of the convex hull of $\{p_0,\ldots , p_t\}$ with the possible exception of the first or last vertex.

Let $p=(p_0,\ldots , p_t)$ be a walk contained in the boundary walk of a face $f\in F$.
The shortest walk from $p_0$ to $p_t$ homotopic to $p$, denoted $\geo(p)$, is called the \emph{geodesic} between $p_0$ and $p_t$. For a walk $p$, let $\gamma_p:[0,1]\rightarrow \mathbb{R}^2$ be a piecewise linear arc from $p_0$ to $p_t$ that traverses the edges of $p$ in the given order. A \emph{homotopy} between two walks, $p=(p_0,\ldots , p_t)$ and $q=(q_0,\ldots , q_{t'})$, is a continuous function $h: [0,1]\times[0,1]\rightarrow \mathbb{R}^2$ such that $h(0,\cdot) = \gamma_p(\cdot)$, $h(1,\cdot) = \gamma_q(\cdot)$, $h(\cdot,0) = p_0=q_0$, $h(\cdot, 1) = p_t=q_{t'}$, $h(a,b)\in \bigcup_{f\in F}f$ for all $(a,b)\in (0,1)\times (0,1)$. Intuitively, the walk $p$ can be continuously deformed into $q$ in the face $f$.
The walks $p$ and $q$ are \emph{homotopic} if such a homotopy exists.

\begin{figure}[htb]
\centering
\def\svgwidth{.8\columnwidth}
\begingroup%
  \makeatletter%
  \providecommand\color[2][]{%
    \errmessage{(Inkscape) Color is used for the text in Inkscape, but the package 'color.sty' is not loaded}%
    \renewcommand\color[2][]{}%
  }%
  \providecommand\transparent[1]{%
    \errmessage{(Inkscape) Transparency is used (non-zero) for the text in Inkscape, but the package 'transparent.sty' is not loaded}%
    \renewcommand\transparent[1]{}%
  }%
  \providecommand\rotatebox[2]{#2}%
  \ifx\svgwidth\undefined%
    \setlength{\unitlength}{429.43066406bp}%
    \ifx\svgscale\undefined%
      \relax%
    \else%
      \setlength{\unitlength}{\unitlength * \real{\svgscale}}%
    \fi%
  \else%
    \setlength{\unitlength}{\svgwidth}%
  \fi%
  \global\let\svgwidth\undefined%
  \global\let\svgscale\undefined%
  \makeatother%
  \begin{picture}(1,0.25453794)%
    \put(0,0){\includegraphics[width=\unitlength,page=1]{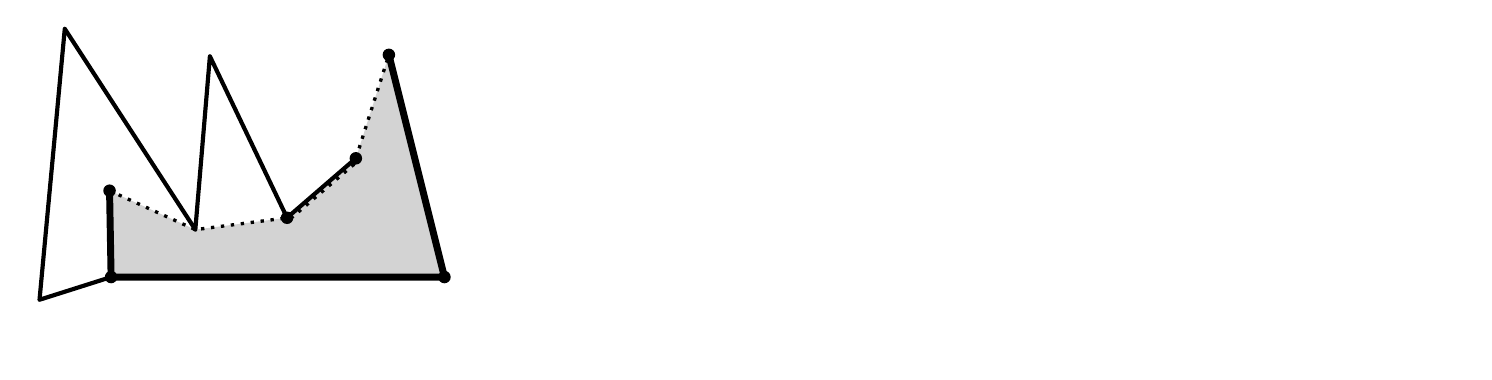}}%
    \put(0.28871011,0.04562865){\color[rgb]{0,0,0}\makebox(0,0)[lb]{\smash{$p_1$}}}%
    \put(0,0){\includegraphics[width=\unitlength,page=2]{geod-example.pdf}}%
    \put(0.12182641,0.00579987){\color[rgb]{0,0,0}\makebox(0,0)[lb]{\smash{(a)}}}%
    \put(0,0){\includegraphics[width=\unitlength,page=3]{geod-example.pdf}}%
    \put(0.48033978,0.00580222){\color[rgb]{0,0,0}\makebox(0,0)[lb]{\smash{(b)}}}%
    \put(0,0){\includegraphics[width=\unitlength,page=4]{geod-example.pdf}}%
    \put(0.82863433,0.00580222){\color[rgb]{0,0,0}\makebox(0,0)[lb]{\smash{(c)}}}%
    \put(0.2537617,0.23149408){\color[rgb]{0,0,0}\makebox(0,0)[lb]{\smash{$p_0$}}}%
    \put(0.0668696,0.03822856){\color[rgb]{0,0,0}\makebox(0,0)[lb]{\smash{$p_2$}}}%
    \put(0.05432268,0.13912942){\color[rgb]{0,0,0}\makebox(0,0)[lb]{\smash{$p_3$}}}%
    \put(0.60850681,0.04221873){\color[rgb]{0,0,0}\makebox(0,0)[lb]{\smash{$p_1$}}}%
    \put(0.51319131,0.24042385){\color[rgb]{0,0,0}\makebox(0,0)[lb]{\smash{$p_0$}}}%
    \put(0.38191831,0.03798715){\color[rgb]{0,0,0}\makebox(0,0)[lb]{\smash{$p_2$}}}%
    \put(0.43719944,0.15077211){\color[rgb]{0,0,0}\makebox(0,0)[lb]{\smash{$p_3$}}}%
    \put(0.81877382,0.24583347){\color[rgb]{0,0,0}\makebox(0,0)[lb]{\smash{$p_1$}}}%
    \put(0.68479637,-0.00034599){\color[rgb]{0,0,0}\makebox(0,0)[lb]{\smash{$p_0$}}}%
    \put(0.96493903,0.04462578){\color[rgb]{0,0,0}\makebox(0,0)[lb]{\smash{$p_2$}}}%
    \put(0.75089298,0.0516515){\color[rgb]{0,0,0}\makebox(0,0)[lb]{\smash{$p_3$}}}%
    \put(0.8487234,0.13190323){\color[rgb]{0,0,0}\makebox(0,0)[lb]{\smash{$p_4$}}}%
  \end{picture}%
\endgroup%
\caption{(a) A convex path $p=(p_0,\ldots ,p_3)$ with vertices in convex position.
(b)  A convex path $p=(p_0,\ldots , p_3)$, where $p_0\in \conv(p_1,p_2,p_3)$.
(c) A convex path $p=(p_0,\ldots , p_4)$ with four edges, where $\geo(p)$ contains edge $p_0p_1$.}
\label{fig:1}
\end{figure}

Given a walk $p=(p_0,\ldots , p_t)$ and polygonal environment $\bigcup_{f\in F}f$ with $n$ vertices,
the geodesic can be computed in $O(n\log n)$ time~\cite{Bes03}.
It is known~\cite{EKL06,HS94} that all interior vertices of $\geo(p)$ are reflex vertices of $G$;
and if $p$ is a convex chain, then so is $\geo(p)$ (it may be a straight-line segment). Geodesics play a crucial role in our worst-case bounds, since $\|\geo(p)\|\leq \|p\|$ by definition, and the edges of $\geo(p)$ do not cross any existing edges of $G$. We show the following.

\begin{lemma}\label{lem:1}
Let $G=(V,E)$ be a PSLG, and let $p=(p_0,\ldots, p_t)$ be a safe convex walk contained in some facial walk of $G$.
Then $\geo(p)$ is a simple path that does not contain any vertices of $p$ except for $p_0$ and $p_t$ at its endpoints.
\end{lemma}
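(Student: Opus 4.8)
The plan is to use the two cited structural facts — that $\geo(p)$ is again a convex chain, and that its interior vertices are reflex vertices of $G$ — together with the convex position of the vertices of $p$ forced by safeness. Write $q=\geo(p)=(q_0,\dots,q_s)$ with $q_0=p_0$, $q_s=p_t$, and let $H=\conv(\{p_0,\dots,p_t\})$. Since $p$ is safe, its interior vertices $p_1,\dots,p_{t-1}$ (and all but at most one of the endpoints) lie in convex position on $\partial H$, so $p$ is a \emph{simple} convex arc along $\partial H$ with total turning strictly less than $2\pi$, and the face $f$ bounded by $p$ lies on its concave (inner) side. In particular the wedge of $f$ at each interior vertex $p_i$ is convex, i.e.\ has interior angle $\angle p_{i-1}p_ip_{i+1}\in(0,\pi)$.

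The first main step is to show $q\subseteq H$. Intuitively $q$ is the string $p$ pulled taut toward the inner side of the convex arc; I would argue that $q$ cannot leave $H$ because its edges do not cross the edges of $p\subseteq\partial H$, and a shortest homotopic representative on the concave side of a convex arc can only be pulled inward. Formally I would suppose a point of $q$ lies outside $H$, follow $q$ to its first crossing of $\partial H$, and derive a contradiction with either non-crossing of $G$'s edges or minimality of length.

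Given $q\subseteq H$, both conclusions follow. For simplicity, $q$ is a convex chain contained in the bounded convex set $H$ with endpoints $p_0,p_t$; its total turning is at most that of $p$ and hence below $2\pi$, and a convex polygonal chain of total turning below $2\pi$ is embedded, so $q$ is a simple path. For vertex-avoidance, note first that the relative interiors of the edges of $q$ lie in the open face and so miss every vertex of $G$; hence it is enough to exclude an interior vertex $q_j$ of $q$ equal to some $p_i$, $0<i<t$. If $q_j=p_i$, then $q$ turns at $p_i$ while remaining in $\bar f$; but the wedge of $f$ at $p_i$ is convex, so the two edges of $q$ at $p_i$ span a cone of angle $<\pi$, and replacing the corner by a short straight segment inside that cone keeps $q$ in $\bar f$ and in the same homotopy class while strictly shortening it — contradicting minimality. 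Thus $q$ bends only at reflex-with-respect-to-$f$ vertices, none of which is an interior vertex of $p$.

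I expect the hull-containment $q\subseteq H$ to be the crux: it is exactly the step that fails when $p$ is not safe (then $p$ may spiral with total turning $\ge 2\pi$ and the geodesic can run along the edge $p_0p_1$ through the interior vertex $p_1$, as in Fig.~\ref{fig:1}(c)). The endpoint exception permitted by safeness — one of $p_0,p_t$ possibly interior to $H$, as in Fig.~\ref{fig:1}(b) — must also be treated inside this step, checking that the geodesic edge incident to the interior endpoint heads directly toward the opposite end and separates from the convexly placed $p_1,\dots,p_{t-1}$ instead of abutting one of them.
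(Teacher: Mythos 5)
Your skeleton is the same as the paper's: safeness puts $p_1,\ldots,p_{t-1}$ on $\partial H$, and everything reduces to trapping $\geo(p)$ inside $H$. The paper's proof is exactly this and is no more detailed at the crux than yours: it asserts, ``by construction,'' that $\geo(p)$ lies in ${\rm int}(H)$ with the possible exception of its endpoints, and is then done in one line. The important difference is that the paper claims \emph{strict} interior containment, which finishes both conclusions at once: interior vertices of $p$ lie on $\partial H$, so the geodesic cannot reach them through \emph{any} wedge, and touching/overlap phenomena on $\partial H$ are excluded outright. You claim only $q\subseteq H$ and then try to finish with two local arguments, and it is these two arguments --- not the (admittedly sketched) containment step --- that break. (Your plan for containment also needs care: the chord replacing an excursion outside $H$ can be blocked by parts of $G$ lying just inside $\partial H$, so both non-crossing and the homotopy class must be checked; this is the same global difficulty the paper sweeps into ``by construction.'')

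The two concrete failures are these. First, ``a convex polygonal chain of total turning below $2\pi$ is embedded'' is false. Take five edges with directions $0^\circ,-60^\circ,-120^\circ,-180^\circ,-240^\circ$ (four turns of $60^\circ$, total turning $240^\circ<360^\circ$) and lengths $1,\,0.1,\,0.1,\,0.3,\,1$: the fifth edge crosses the first. So simplicity of $\geo(p)$ cannot come from convexity plus a turning bound; it needs minimality in the homotopy class (a null-homotopic loop could be excised, and essential loops must be ruled out via safeness), or again the strict interior containment. Second, your shortcut argument at $q_j=p_i$ silently assumes that $q$ enters and leaves $p_i$ through the same wedge of $f$ that the walk $p$ occupies there. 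If $p_i$ occurs more than once on the facial walk of $f$ --- i.e., $p_i$ is a cut vertex --- then $f$ has other wedges at $p_i$, possibly reflex, and a geodesic turning inside such a wedge admits no length-decreasing shortcut within the closure of $f$; excluding that configuration is again a global homotopy/hull statement, not a local one. (A smaller slip: ``relative interiors of edges of $q$ lie in the open face'' is not true in general, since a geodesic edge may coincide with a graph edge, as in Fig.~\ref{fig:1}(c); general position is what actually gives the reduction to vertices of $q$.) In short: right strategy, identical to the paper's, but the supplementary lemmas you introduce to compensate for the weaker containment statement are respectively false and incomplete; proving containment in ${\rm int}(H)$, as the paper asserts, would make them unnecessary.
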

\begin{proof}
Let $H$ be the convex hull of $\{p_0,\ldots, p_t\}$. By assumption, the vertices of $p$ lie on $\partial H$ with the possible exception of $p_0$ and $p_t$, which may lie in the interior of $H$. By construction, $\geo(p)$ lies in ${\rm int}(H)$ with the possible exception of its endpoints $p_0$ and $p_t$. Hence, $\geo(p)$ does not contain any interior vertex of $p$.
\end{proof}

\begin{lemma}\label{lem:1+}
Let $G=(V,E)$ be a PSLG, and let $p=(p_0,\ldots, p_t)$ be a convex walk contained in some facial walk of $G$.
If $\geo(p_1,\ldots, p_t)$ is a simple path that contains none of the vertices $p_0$, $p_2,\ldots, p_{t-1}$,
then $\geo(p)$ is also a simple path that does not contain any of the vertices $p_1,\ldots, p_{t-1}$.
\end{lemma}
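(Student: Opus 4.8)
The plan is to peel the single vertex $p_0$ off the front and compare $\geo(p)$ with the geodesic of the tail. Write $q:=\geo(p_1,\ldots,p_t)$ and $q=(p_1=v_0,v_1,\ldots,v_k=p_t)$. By hypothesis $q$ is simple and meets none of $p_0,p_2,\ldots,p_{t-1}$; together with simplicity this makes its interior vertices $v_1,\ldots,v_{k-1}$ reflex vertices of $G$ (by the cited fact) that are distinct from all of $p_0,p_1,\ldots,p_{t-1}$. Since $q$ is homotopic to $(p_1,\ldots,p_t)$, the walk $p$ is homotopic to the concatenation $p_0p_1\cdot q$, so $\geo(p)=\geo(p_0p_1\cdot q)$. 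Because $q$ is already taut, tightening $p_0p_1\cdot q$ can only modify an initial portion: by the funnel structure of shortest homotopic paths there is an index $m$ with $\geo(p)=(p_0,v_m,v_{m+1},\ldots,v_k)$, where $p_0v_m$ is a single segment tangent to the convex chain $q$ (recall $q$ is convex, being the geodesic of the convex chain $(p_1,\ldots,p_t)$) and $(v_m,\ldots,v_k)$ is a suffix of $q$.

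Two of the three assertions are then immediate. First, $\geo(p)$ is simple because a shortest path in its homotopy class never crosses itself. Second, the interior vertices of $\geo(p)$ are $v_m,\ldots,v_{k-1}$, a sublist of the interior vertices of $q$; these avoid $p_2,\ldots,p_{t-1}$ by hypothesis and avoid $p_1=v_0$ by simplicity of $q$, while no $p_i$ with $1\le i\le t-1$ can lie in the relative interior of the new segment $p_0v_m$, since that would put the three vertices $p_0,p_i,v_m$ of $V$ on a line, contradicting general position. Hence $\geo(p)$ avoids $p_1,\ldots,p_{t-1}$ the moment we know $m\ge 1$, i.e. that $\geo(p)$ does not retain $v_0=p_1$.

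Establishing $m\ge 1$ is the heart of the argument. Because $p_0p_1$ and $p_1p_2$ are consecutive edges of the facial walk of the face $f$ containing $p$, the interior angle of $f$ at $p_1$ equals the turning angle $\angle p_0p_1p_2\in(0,\pi)$, so $p_1$ is a \emph{convex} vertex of $f$; as a geodesic bends only at reflex vertices of $f$, the path $\geo(p)$ cannot turn at $p_1$, and it cannot run straight through $p_1$ either, as that again forces three collinear vertices of $V$. The delicate point, and the one place the hypothesis is indispensable, is to guarantee that the initial edge $p_1v_1$ of $q$ emanates into the convex wedge of $f$ at $p_1$ (the wedge bounded by $p_1p_0$ and $p_1p_2$), so that the part of $\geo(p)$ leaving $p_1$ genuinely lies in this convex wedge and the corner is cut. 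This is exactly what breaks down in the degenerate configuration of Fig.~\ref{fig:1}(c): there the tail geodesic would bend back through one of $p_0,p_2,\ldots,p_{t-1}$, its first edge leaves the convex wedge, and $\geo(p)$ keeps the edge $p_0p_1$. The assumption that $\geo(p_1,\ldots,p_t)$ is simple and meets none of $p_0,p_2,\ldots,p_{t-1}$ forces $v_1$ to be a genuine reflex vertex of $G$ on the $p_2$-side of $p_1$, which pins the first edge of $q$ inside the convex wedge and yields $m\ge 1$. I expect the careful verification of this wedge containment — reconciling the local convexity at $p_1$ with the global shape of $q$ inside $f$ — to be the main obstacle in turning this sketch into a rigorous proof.
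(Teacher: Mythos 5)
Your two local observations at $p_1$ are sound (the first edge of $q=\geo(p_1,\ldots,p_t)$ must leave $p_1$ inside the face's convex wedge $\angle p_0p_1p_2$, and no geodesic can turn at, or run straight through, the convex vertex $p_1$), but the structural claim you build everything on --- that $\geo(p)=(p_0,v_m,v_{m+1},\ldots,v_k)$, a \emph{single} segment from $p_0$ tangent to the convex chain $q$ followed by a suffix of $q$ --- is false, and the rest of your argument depends on it. What shields a tightening path from outside interference is that edges of $G$ can cross neither the edge $p_0p_1$ (an edge of $G$) nor the chords of $q$ (they lie in the face); but nothing protects the third side of the region you are sweeping. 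An edge of $G$ may poke into the triangle $\Delta(p_0,p_1,v_1)$ through the side $p_0v_1$: take $p_1=(0,0)$, $v_1=(10,2)$, $p_0=(-1,-5)$, and a vertex $w=(3,-1)$ with an edge of $G$ descending from $w$ across the segment $p_0v_1$, attached to the rest of $G$ well below $q$, so that $\geo(p_1,\ldots,p_t)$ and the hypothesis of the lemma are unaffected. Every path from $p_0$ homotopic to $p_0p_1$ followed by $q$ must pass above $w$ (the straight segment $p_0v_1$ crosses the descending edge), so $\geo(p)=(p_0,w,v_1,\ldots,v_k)$: its initial portion bends at a vertex lying on neither $p$ nor $q$, and is not one tangent segment. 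Once such extra bend vertices can occur, your claim that the interior vertices of $\geo(p)$ form a sublist of those of $q$ collapses, and with it your argument that $\geo(p)$ avoids $p_2,\ldots,p_{t-1}$ --- which is precisely what the lemma asserts. (Incidentally, the point you flagged as the main difficulty, the wedge containment at $p_1$, is the easy part.)

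The paper's proof departs from yours at exactly this step, and the difference is what makes it work. It writes $\geo(p)$ as the concatenation of $\geo(p_0,p_1,r)$, $r=v_1$, with the suffix of $\geo(p_1,\ldots,p_t)$ starting at $r$; the initial piece is allowed to be an arbitrary geodesic (it may bend at vertices such as $w$ above), but it is confined to the triangle $\Delta(p_0,p_1,r)$. The paper then forms the simple cycle $C=(p_1,\ldots,p_t)\cup\geo(p_1,\ldots,p_t)$ (Lemma~\ref{lem:1} and the hypothesis), uses the wedge argument to place the edge $p_0p_1$ in the exterior of $C$, and uses the convexity of the chain $\geo(p_1,\ldots,p_t)$ to conclude that the interior of $\Delta(p_0,p_1,r)$ lies in the exterior of $C$. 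Since $p_2,\ldots,p_t$ all lie on $C$, no bend vertex of the initial piece --- whatever it turns out to be --- can coincide with any of them; the unknown extra vertices are harmless not because they cannot exist (they can), but because they are trapped on the wrong side of $C$. To repair your sketch you would have to abandon the tangent-segment picture and argue this confinement instead, at which point you have reconstructed the paper's proof rather than an alternative to it.
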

\begin{proof}
By Lemma~\ref{lem:1}, $C=(p_1,\ldots, p_t)\cup \geo(p_1,\ldots, p_t)$ is a simple cycle. Let $p_1r$ be the first edge of $\geo(p_1,\ldots, p_t)$. Since $\geo(p_1,\ldots, p_t)$ is homotopic to $(p_1,\ldots , p_t)$, the edge $p_1r$ lies in the angular domain $\angle p_0p_1p_2$, and $r\not\in \{p_0,p_2\}$. Consequently, the edge $p_0p_1$ lies in the exterior of the cycle $C$. Note that $\geo(p)=\geo(p_0,p_1,r)\cup \geo(r,p_1,\ldots, p_t)$.

Since $\geo(p_1,\ldots, p_t)$ is a convex chain, the interior of triangle $\Delta(p_0,p_1,r)$ lies in the exterior of $C$. On the other hand, $\geo(p_0,p_1,r)$ lies in $\Delta(p_0,p_1,r)$, and so it is disjoint from the vertices $p_2,\ldots, p_t$. We conclude that $\geo(p)$ is a simple path that does not contain any of the vertices $p_1,\ldots, p_{t-1}$.
\end{proof}

\section{Bounds on the Sum of Edge Lengths}\label{sec:augment}

Let $G$ be a PSLG with no three collinear vertices. Denote by ${\mathcal P}=\mathcal{P}(G)$ the set of maximal convex walks contained in the facial walks of $G$ (note that ${\mathcal P}$ can be computed with a graph traversal in $O(|E|)$ time). Partition ${\mathcal P}$ into three subsets: ${\mathcal P}_0$ contains the convex walks that consist of a single edge; ${\mathcal P}_1$ contains the closed convex walks $(p_0,\ldots, p_t)$, i.e., $p_0=p_t$; and ${\mathcal P}_2$ contains all open convex walks of two or more edges. We define a \emph{dual graph} $D$ where the nodes correspond to the convex walks in  ${\mathcal P}_1\cup {\mathcal P}_2$, and two nodes are adjacent if and only if the corresponding convex chains share an edge in $E$.

\begin{lemma}\label{lem:2}
Let $G=(V,E)$ be a connected PSLG with $|V|\ge 3$. Then
\begin{enumerate}\itemsep -2pt
\item[{\rm (a)}] every edge in $E$ is part of a convex chain in ${\mathcal P}_1\cup {\mathcal P}_2$,
\item[{\rm (b)}]  the dual graph $D$ is connected,
\end{enumerate}
\end{lemma}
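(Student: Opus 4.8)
The plan is to reduce both parts to one elementary angular fact. At any vertex $w$, the edges incident to $w$ in counterclockwise order, say $e_1,\dots,e_{\deg(w)}$, partition the total angle $2\pi$ around $w$ into sectors $\gamma_1,\dots,\gamma_{\deg(w)}$ (the interior face-angles) with $\sum_i\gamma_i=2\pi$. Since no three vertices are collinear, no sector equals $\pi$, and because the sectors sum to $2\pi$ at most one of them can exceed $\pi$. Equivalently, the two sectors flanking any single edge at $w$ cannot both be reflex. A convex walk turns from $e_i$ to $e_{i+1}$ at $w$ precisely when the sector $\gamma_i$ between them satisfies $\gamma_i<\pi$; this is the only place where the local angular data enters, so the argument is insensitive to whether $w$ is a cut vertex.

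For part (a), take any edge $uv$ and look at its two flanking sectors at $u$. By the fact above at least one of them, say the one on the side of a neighboring edge $ux$, is convex. Then the facial walk on that side makes a convex turn $x,u,v$, so the maximal convex walk containing this occurrence of $uv$ also contains $ux$ and therefore has at least two edges; hence it lies in ${\mathcal P}_1\cup{\mathcal P}_2$, and $uv$ is part of it.

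For part (b) I would first prove a local statement: for every vertex $w$, the set $\mathcal{C}(w)$ of nontrivial chains that contain some edge incident to $w$ is connected in $D$. Indexing the sectors as above, for each convex sector $\gamma_i<\pi$ there is a nontrivial chain $C_i$ through $w$ containing both $e_i$ and $e_{i+1}$, and consecutive such chains $C_i,C_{i+1}$ share the edge $e_{i+1}$, so they are adjacent in $D$. Since at most one sector is reflex, the chains $C_i$ form a single path (a cycle when $w$ is convex) running around $w$ and covering every incident edge $e_i$; any remaining nontrivial chain containing some $e_i$ shares that very edge with a chain $C_i$ and is therefore adjacent to the path. The degenerate cases $\deg(w)\in\{1,2\}$ are checked directly (two nontrivial chains through a leaf edge share that edge; a degree-$2$ vertex yields one spanning chain $C_i$). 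This establishes that $\mathcal{C}(w)$ is connected.

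Finally I would glue these local pieces using the connectivity of $G$. Given two nontrivial chains $P$ and $Q$, choose edges $e\in P$ and $f\in Q$ and a path $(w_0,\dots,w_k)$ in $G$ from an endpoint of $e$ to an endpoint of $f$. For each $i$ the edge $w_iw_{i+1}$ lies, by part (a), in some nontrivial chain, which belongs to both $\mathcal{C}(w_i)$ and $\mathcal{C}(w_{i+1})$; hence consecutive connected sets $\mathcal{C}(w_i)$ overlap, and their union is connected in $D$. As $P\in\mathcal{C}(w_0)$ and $Q\in\mathcal{C}(w_k)$, the chains $P$ and $Q$ are connected in $D$, proving (b). I expect the main obstacle to be the local analysis in part (b): verifying that the chains $C_i$ genuinely form a single connected path (or cycle) around each vertex and that all other incident chains attach to it, while correctly handling reflex, low-degree, and cut vertices.
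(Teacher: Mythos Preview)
Your argument is correct and mirrors the paper's proof: both rely on the fact that at most one sector around any vertex is reflex to establish (a), and both prove (b) by showing that the chains through each vertex induce a path or cycle in $D$ and then gluing these local pieces along a path in $G$. The only minor omission is that in (a) you should explicitly pick the endpoint of $uv$ with degree at least two (which exists since $G$ is connected with $|V|\ge 3$), as the paper does, since a leaf has no neighboring edge $ux$ and hence no flanking sectors.
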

\begin{proof}
	(a)  Let $ab\in E$. Since $G$ has at least three vertices and is connected, we may assume that $b$ is incident to two or more edges. Assume that $bc^-$ and $bc^+$ are the edges preceding and following $ba$ in the clockwise rotation of edges around $b$ respectively (possibly $c^-=c^+$). Then the boundary walks of the faces incident to $b$ contain the paths $(c^-,b,a)$ and $(a,b,c^+)$. Since $b$ is the apex of at most one reflex angle, we may assume that $\angle c^-ba$ or $\angle abc^+$ is convex; and so $ab$ is part of a maximal convex walk of 2 or more edges. This walk is in either ${\mathcal P}_1$ or ${\mathcal P}_2$.
	
	(b) For every vertex $v\in V$, every angle $\angle uvw$ formed by consecutive incident edges $uv$ and $vw$ is convex with the possible exception of one reflex angle (since the sum of these angles is $2\pi)$. Consequently, the walks $(u,v,w)$ where $\angle uvw<\pi$ are part of distinct convex chains in ${\mathcal P}_1\cup {\mathcal P}_2$  that induce a path or a cycle in the dual graph. Now consider two chains $p_1,p_2\in {\mathcal P}_1\cup {\mathcal P}_2$, and two arbitrary edges $e_1$ and $e_2$ from them.
	Since $G$ is connected, there is a path $q=(q_0,\ldots , q_t)$ such that $e_1=q_0q_1$ and $e_2=q_{t-1}q_t$. Since every two consecutive edges of $q$ are part of convex cycles in the same component of $D$, the chains $p_1$ and $p_2$ are also part of the same component of $D$.
\end{proof}

The proof of Lemma~\ref{lem:2} is provided in the Appendix. When we modify a given PSLG with edge insertion operations, we prove the following.

\begin{theorem}\label{thm:1}
Let $G=(V, E)$ be a connected PSLG with $|V|\geq 3$ and
no three collinear vertices. Then $G$ can be augmented to a 2-edge-connected PSLG $G'=(V,E\cup E^+)$ such that $\|E^+\| \leq 2\|E\|$, and this bound is the best possible.
\end{theorem}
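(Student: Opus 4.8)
The plan is to build $E^+$ out of geodesic shortcuts of the maximal convex walks, so that every original edge is placed on a cycle while the added length is charged against the two facial appearances of each edge. First I would take the decomposition $\mathcal{P}=\mathcal{P}_0\cup\mathcal{P}_1\cup\mathcal{P}_2$ from the start of Section~\ref{sec:augment}. The closed convex walks in $\mathcal{P}_1$ are already cycles of $G$, so they need nothing. For each open convex walk $p=(p_0,\dots,p_t)\in\mathcal{P}_2$ I would add the geodesic $\geo(p)$: by Lemmas~\ref{lem:1} and~\ref{lem:1+} (applied inductively, peeling off one edge at a time) this is a simple path internally disjoint from $p$, so $p\cup\geo(p)$ is a cycle containing every edge of $p$. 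Since $\geo(p)$ is homotopic to $p$ inside a single face, its edges cross no edge of $G$, it is itself a convex chain, and $\|\geo(p)\|\le\|p\|$.

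For the length bound I would use that every edge of $E$ occurs exactly twice among the facial walks of $G$, hence in exactly two walks of $\mathcal{P}$, so $\sum_{p\in\mathcal{P}}\|p\|=2\|E\|$; in particular $\sum_{p\in\mathcal{P}_2}\|\geo(p)\|\le\sum_{p\in\mathcal{P}_1\cup\mathcal{P}_2}\|p\|\le 2\|E\|$. Taking $E^+$ to be the union of the new geodesic edges (each distinct segment counted once) therefore gives $\|E^+\|\le 2\|E\|$. For 2-edge-connectivity I would use the characterization \emph{connected and bridgeless}. The graph $G'=(V,E\cup E^+)$ is connected because $G$ is. No edge is a bridge: by Lemma~\ref{lem:2}(a) every edge of $E$ lies on some chain of $\mathcal{P}_1\cup\mathcal{P}_2$, hence on a cycle of $G'$ (the chain itself if it is in $\mathcal{P}_1$, or $p\cup\geo(p)$ if in $\mathcal{P}_2$), and every new edge lies on such a cycle $p\cup\geo(p)$ by construction. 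A connected bridgeless graph is 2-edge-connected.

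The step I expect to be the main obstacle on the upper-bound side is planarity of $G'$, i.e.\ that the added geodesics do not cross \emph{one another}. Geodesics belonging to convex walks of different faces are separated, since each $\geo(p)$ stays in the (relatively closed) face carrying $p$. The delicate case is two walks on the \emph{same} facial walk, whose endpoints are consecutive reflex vertices of $G$. Here I would fix a simply connected face $f$, break its facial walk at reflex vertices into convex subwalks $p^{(1)},\dots,p^{(s)}$ sharing only those reflex vertices, and argue that $\geo(p^{(1)}),\dots,\geo(p^{(s)})$ are shortest homotopic paths between boundary points that appear in consistent cyclic order along $\partial f$; such paths are pairwise noncrossing and meet only at the shared junction vertices. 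Lemma~\ref{lem:1+} is exactly the tool keeping each geodesic off the interior vertices of its own chain, which is what makes this ordering argument go through.

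For the matching lower bound I would exhibit a family of connected PSLGs with $\|E^+\|/\|E\|\to 2$. The qualitative design is a ``thin'' tree whose own edges screen every short connection between the vertices that must gain degree in a 2-edge-connected augmentation, so that planarity forces any augmenting edge to detour around the tree and hence to have length comparable to a whole chain rather than to a direct segment; each edge is flanked by two convex chains, neither of which admits a cheap geodesic closure. The hard part will be the lower bound itself: showing that \emph{every} noncrossing 2-edge-connected augmentation, not merely the geodesic construction above, must spend $(2-o(1))\|E\|$. I would approach this by a charging argument that assigns each unit of the original tree to a distinct, nearly parallel portion of $E^+$, using the blocking structure to rule out any single added edge simultaneously covering two separated chains; verifying that no clever reuse of augmenting edges beats the bound is where the proof will require the most care.
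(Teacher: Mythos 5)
There is a genuine gap in your upper-bound construction. You add $\geo(p)$ for each \emph{entire} maximal convex walk $p\in\mathcal{P}_2$ and claim, via Lemmas~\ref{lem:1} and~\ref{lem:1+} ``applied inductively,'' that $\geo(p)$ is internally disjoint from $p$. But Lemma~\ref{lem:1} only applies to \emph{safe} walks, and a maximal convex walk need not be safe: it can spiral, with several of its vertices strictly inside the convex hull of the others. This is exactly the situation of Figure~\ref{fig:1}(c), where $\geo(p)$ \emph{contains the edge} $p_0p_1$. Concretely, if $G$ itself is a spiraling path, then the convex side of the facial walk is a single maximal convex walk $p$, your construction adds only $\geo(p)$, the vertex $p_0$ keeps degree one, and $G'$ is not even bridgeless. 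The inductive use of Lemma~\ref{lem:1+} does not repair this: its hypothesis requires $\geo(p_1,\ldots,p_t)$ to avoid the vertex $p_0$ being prepended, and in a spiral that hypothesis fails (were the induction always valid, Figure~\ref{fig:1}(c) could not exist). The paper's device for circumventing this is precisely what your proof omits: decompose every convex walk of two or more edges into edge-disjoint convex subpaths of \emph{two or three edges}; such short convex paths have well-behaved geodesics, each $p\cup\geo(p)$ is then a genuine cycle, and the same double-counting over $\mathcal{P}$ gives $\|E^+\|\le 2\|E\|$. (Theorem~\ref{thm:2}'s proof, Case~2, shows the more elaborate recursion needed if one insists on handling long walks directly.) Relatedly, your claim that walks in $\mathcal{P}_1$ ``are already cycles of $G$, so they need nothing'' is false: a closed convex walk can repeat a vertex, e.g.\ a pendant edge attached inside a convex polygon ($p_1=p_{t-1}$, the configuration treated in Case~1 of Theorem~\ref{thm:2}), and there the pendant edge is a bridge that your construction never repairs.

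The lower bound is also not established. You describe a design goal (a ``thin'' tree with a blocking structure) and a charging strategy, but you exhibit no concrete point set and prove nothing about arbitrary augmentations; by your own admission this is left open. Moreover, the intended construction is far more elaborate than necessary: the paper's example has just four points $p_1=(0,0)$, $p_2=(0,\varepsilon)$, $p_3=(1,0)$, $p_4=(1,\varepsilon)$ with $E=\{p_1p_2,p_2p_3,p_3p_4\}$ (Figure~\ref{fig:3}); since $p_1$ and $p_4$ are leaves and $p_1p_4$ would cross $p_2p_3$, any valid augmentation must contain both $p_1p_3$ and $p_2p_4$, so $\|E^+\|/\|E\|\to 2$ as $\varepsilon\to 0$. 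Your length-bound bookkeeping (each edge appears twice among facial walks, geodesics are no longer than their walks) and your concern about added geodesics not crossing one another are both sound, but the construction they are attached to does not yield a 2-edge-connected graph, and the optimality half of the statement is missing.
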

\begin{proof}
We prove the upper bound constructively, augmenting a connected PSLG $G=(V,E)$ incrementally into a 2-edge-connected PSLG $G'=(V,E\cup E^+)$. Then decompose every convex walk in ${\mathcal P}$ of two or more edges into edge-disjoint convex walks of two or three edges. Let ${\mathcal C}$ be the set of all resulting convex paths of two or three edges (that is, we discard convex walks that consist of a singe edge and convex cycles of 3 edges). See Fig.~\ref{fig:2}(b) for an illustration.

\begin{figure}[h]
\centering
\def\svgwidth{\columnwidth}
\begingroup%
  \makeatletter%
  \providecommand\color[2][]{%
    \errmessage{(Inkscape) Color is used for the text in Inkscape, but the package 'color.sty' is not loaded}%
    \renewcommand\color[2][]{}%
  }%
  \providecommand\transparent[1]{%
    \errmessage{(Inkscape) Transparency is used (non-zero) for the text in Inkscape, but the package 'transparent.sty' is not loaded}%
    \renewcommand\transparent[1]{}%
  }%
  \providecommand\rotatebox[2]{#2}%
  \ifx\svgwidth\undefined%
    \setlength{\unitlength}{828.41708984bp}%
    \ifx\svgscale\undefined%
      \relax%
    \else%
      \setlength{\unitlength}{\unitlength * \real{\svgscale}}%
    \fi%
  \else%
    \setlength{\unitlength}{\svgwidth}%
  \fi%
  \global\let\svgwidth\undefined%
  \global\let\svgscale\undefined%
  \makeatother%
  \begin{picture}(1,0.19059586)%
    \put(0,0){\includegraphics[width=\unitlength,page=1]{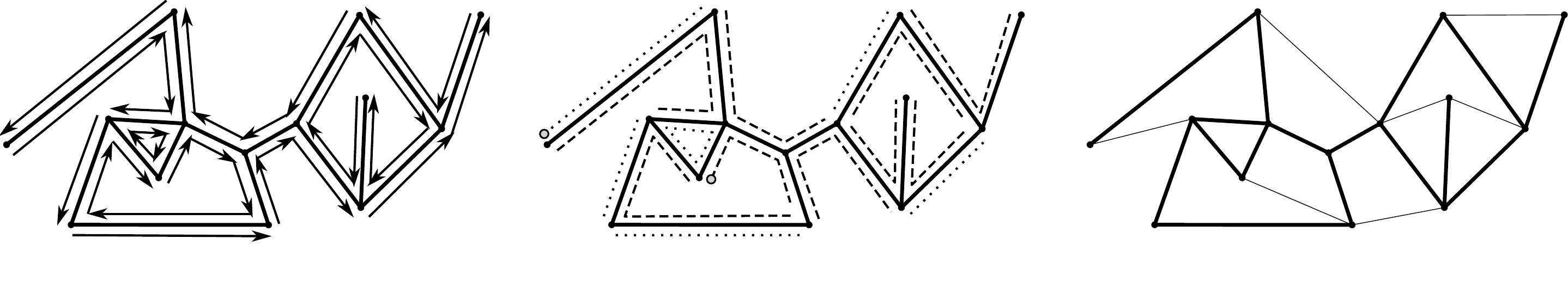}}%
    \put(0.1414397,0.00392041){\color[rgb]{0,0,0}\makebox(0,0)[lb]{\smash{(a)}}}%
    \put(0.49326912,0.00392041){\color[rgb]{0,0,0}\makebox(0,0)[lb]{\smash{(b)}}}%
    \put(0.84646753,0.00392041){\color[rgb]{0,0,0}\makebox(0,0)[lb]{\smash{(c)}}}%
    \put(0,0){\includegraphics[width=\unitlength,page=2]{edge-alg.pdf}}%
  \end{picture}%
\endgroup%
\caption{(a) A PSLG $G=(V,E)$ is its boundary walks.
 (b) Dashed lines indicate convex paths of two or three edges in ${\mathcal C}$;
 dotted lines indicate maximal convex paths of a single edge, and triangles.
 (c) The 2-edge-connected PSLG $G'=(V,E\cup E^+)$ produced by our algorithm.}
\label{fig:2}
\end{figure}

For each convex path $p\in \mathcal{C}$, augment $G$ with the edges of $\geo(p)$ (refer to Fig.~\ref{fig:2}(c)), and denote by $G'$ the resulting graph. Note that $G'$ is 2-edge-connected since every edge in $E$ is part of a cycle by Lemma~\ref{lem:2}(a): Each cycle is either a triangle in $G$, or a cycle $p\cup \geo(p)$ for some
$p\in {\mathcal C}$; and every edge in $E^+$ is part of a cycle by construction.

Next, we derive an upper bound for $\|E^+\|$. Every $e\in E$ appears twice in the boundary walks of the faces of $G$, and so it appears in at most two convex paths in $\mathcal C$. By definition, $\|\geo(p)\|\leq \|p\|$ for every $p\in {\mathcal C}$. Overall, we have
$$\|E^+\|=
\sum_{p\in \mathcal{C}} \|\geo(p)\| \leq
\sum_{p\in \mathcal{C}}\|p\|\leq
\sum_{p\in \mathcal{P}}\|p\|=
2\|E\|.$$

\begin{figure}[h]
\centering
\def\svgwidth{.3\columnwidth}
\begingroup%
  \makeatletter%
  \providecommand\color[2][]{%
    \errmessage{(Inkscape) Color is used for the text in Inkscape, but the package 'color.sty' is not loaded}%
    \renewcommand\color[2][]{}%
  }%
  \providecommand\transparent[1]{%
    \errmessage{(Inkscape) Transparency is used (non-zero) for the text in Inkscape, but the package 'transparent.sty' is not loaded}%
    \renewcommand\transparent[1]{}%
  }%
  \providecommand\rotatebox[2]{#2}%
  \ifx\svgwidth\undefined%
    \setlength{\unitlength}{147.6bp}%
    \ifx\svgscale\undefined%
      \relax%
    \else%
      \setlength{\unitlength}{\unitlength * \real{\svgscale}}%
    \fi%
  \else%
    \setlength{\unitlength}{\svgwidth}%
  \fi%
  \global\let\svgwidth\undefined%
  \global\let\svgscale\undefined%
  \makeatother%
  \begin{picture}(1,0.18503062)%
    \put(0,0){\includegraphics[width=\unitlength,page=1]{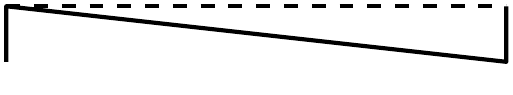}}%
    \put(1.00869647,0.10597043){\color[rgb]{0,0,0}\makebox(0,0)[lb]{\smash{$\varepsilon$}}}%
    \put(0,0){\includegraphics[width=\unitlength,page=2]{lowerbound.pdf}}%
    \put(-0.04272756,-0.02722992){\color[rgb]{0,0,0}\makebox(0,0)[lb]{\smash{$p_1$}}}%
    \put(0.46769892,-0.02384202){\color[rgb]{0,0,0}\makebox(0,0)[lb]{\smash{$1$}}}%
    \put(-0.0440314,0.22161412){\color[rgb]{0,0,0}\makebox(0,0)[lb]{\smash{$p_2$}}}%
    \put(0.96983538,-0.0114394){\color[rgb]{0,0,0}\makebox(0,0)[lb]{\smash{$p_3$}}}%
    \put(0.96983538,0.22090027){\color[rgb]{0,0,0}\makebox(0,0)[lb]{\smash{$p_4$}}}%
    \put(-0.04964884,0.09439476){\color[rgb]{0,0,0}\makebox(0,0)[lb]{\smash{$\varepsilon$}}}%
  \end{picture}%
\endgroup%
\caption{A PSLG $G=(V,E)$ with three solid edges, where the augmentation to 2-edge-connectivity requires the addition of the dashed edges.}
\label{fig:3}
\end{figure}

We now show a matching lower bound.
For every $\varepsilon>0$, let $G_\varepsilon$ be defined on four vertices $p_1=(0,0)$, $p_2=(0,\varepsilon)$, $p_3=(1,0)$, and $p_4=(1,\varepsilon)$ with edge set $E=\{p_1p_2, p_2p_3,p_3p_4\}$; refer to  Figure \ref{fig:3}.
Since $p_1$ and $p_4$ are leaves in $G$,
and $p_1p_4$ would cross $p_2p_3$, both $p_1p_3$ and $p_2p_4$ have to be added.
We have $\lim_{\varepsilon \rightarrow 0}\|E^+\|/\|E\|=2$, and the ratio $\|E^+\|/\|E\|\leq 2$ is the best possible.
\end{proof}

We strengthen Theorem~\ref{thm:1} to vertex-connectivity.
\begin{theorem}\label{thm:2}
Let $G=(V, E)$ be a connected PSLG with $|V|\geq 3$ and
no three collinear vertices. Then $G$ can be augmented to a 2-connected PSLG $G=(V, E\cup E^+)$ such that $\|E^+\| \leq 2\|E\|$, and this bound is the best possible.
\end{theorem}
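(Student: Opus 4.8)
The plan is to keep the geodesic-augmentation scheme of Theorem~\ref{thm:1}, but to enlarge the convex pieces so that the cycles we create overlap along \emph{edges} rather than only at vertices, and then to read off $2$-connectivity from a block (2-connected component) decomposition governed by the connected dual graph $D$ of Lemma~\ref{lem:2}. The lower bound is inherited directly from Theorem~\ref{thm:1}: for $G_\varepsilon$ in Figure~\ref{fig:3} the leaves $p_1,p_4$ force both $p_1p_3$ and $p_2p_4$ into $E^+$ (the only non-crossing choices), and $(V,E\cup\{p_1p_3,p_2p_4\})$ is readily checked to be $2$-connected, so $\|E^+\|/\|E\|\to 2$ is already unavoidable for $2$-connectivity.

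First I would decompose each maximal convex walk in $\mathcal P_1\cup\mathcal P_2$ into \emph{maximal safe} convex subwalks, rather than into the fixed $2$- and $3$-edge pieces of Theorem~\ref{thm:1}; since every $2$-edge convex walk is trivially safe, such a decomposition always exists and merely coarsens the previous one, and I would use Lemma~\ref{lem:1+} to grow each safe subwalk as far as possible so that few splits occur. For a safe subwalk $S$, Lemma~\ref{lem:1} makes $\geo(S)$ a simple path internally disjoint from the vertices of $S$, so $S\cup\geo(S)$ is a simple cycle through \emph{every} vertex of $S$; in particular, deleting any interior vertex of $S$ leaves its two neighbours joined along the remainder of the cycle. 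The length bound is unchanged: each edge of $E$ lies in at most two facial walks, hence in at most two subwalks, and $\|\geo(S)\|\le\|S\|$, so $\|E^+\|\le 2\|E\|$.

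Next I would certify that the resulting $G'$ has no cut vertex via the block structure. Each cycle $S\cup\geo(S)$ lies in a single block of $G'$, and two maximal convex walks that share an edge of $E$ produce cycles sharing that edge, hence lying in a common block. Since $D$ is connected (Lemma~\ref{lem:2}(b)), all of these cycles coalesce into one block $B$; and since every edge of $E$ belongs to some convex walk (Lemma~\ref{lem:2}(a)), $B$ spans $V$. A spanning block is $2$-connected, and any remaining edges of $E^+$ can only help, so $G'$ is $2$-connected.

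The main obstacle is the interface between two consecutive safe subwalks $S_1,S_2$ of the \emph{same} maximal convex walk: they meet only at the split vertex $s$, so a priori $s$ could be a cut vertex — precisely the failure mode of the Theorem~\ref{thm:1} construction, where a four-edge convex path is forced to split into two triangles sharing their apex. Closing this gap is the crux. One option is to let $S_1$ and $S_2$ overlap in the single edge at $s$, so that their cycles share that edge and merge into the same block; the work is then to re-examine the accounting and confirm that the overlaps still telescope to $\|E^+\|\le 2\|E\|$. A second option, which adds no length, is to exploit that a split is forced at $s$ only when $\geo(S_2)$ would run through $s$ — exactly the configuration controlled by Lemma~\ref{lem:1+} — and to show that in this case another convex chain necessarily passes through $s$, so that the connectivity of $D$ already places the two cycles in a single block. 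Establishing that no split vertex survives as a cut vertex while maintaining $\|E^+\|\le 2\|E\|$ is where the real difficulty lies.
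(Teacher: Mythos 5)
Your framework is the paper's own: decompose the maximal convex walks into subwalks whose geodesics give simple cycles, then glue globally through Lemma~\ref{lem:2} (adjacent chains share an edge, hence two vertices, and $D$ is connected); your lower-bound argument is also the paper's and is fine. But the step you yourself flag as ``the crux''---showing that a split vertex between two consecutive subwalks of the \emph{same} maximal convex walk cannot survive as a cut vertex---is exactly what is missing, and neither of your two options closes it. Option~1 (letting consecutive subwalks overlap in one edge) breaks the accounting: an overlap edge then lies in two subwalks coming from the same facial walk, and possibly in two more from the other face incident to it, so it is charged up to four times and the chain of inequalities $\|E^+\|\leq\sum_S\|S\|\leq\sum_{p\in\mathcal{P}}\|p\|=2\|E\|$ no longer holds. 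Option~2 does not work as stated: even if some other maximal convex chain passes through the split vertex $s$, its cycle shares an edge (hence two vertices) with at most one of the two cycles meeting at $s$, while sharing only the single vertex $s$ with the other, so the blocks need not merge and $s$ can remain a cut vertex; moreover, when $s$ has degree two in $G$, no second convex chain through $s$ need exist at all.

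The paper closes the gap with one observation that costs no extra length and keeps the subwalks edge-disjoint. In its Case~2 it grows the subwalk $p'$ greedily outward from the convex-hull chain, using Lemma~\ref{lem:1+} as the induction step: appending the next vertex preserves the invariant ``$\geo(p')$ contains no vertex of $p'$'' as long as $\geo(p')$ does not pass through the vertex being appended. Consequently a split at $p_i$ is forced only because $\geo(p')$ already contains the adjacent vertex $p_{i-1}$; but then $p_{i-1}$ lies on the cycle $p'\cup\geo(p')$, and the neighbouring subwalk contains the edge $p_{i-1}p_i$, so the two consecutive cycles automatically share the two vertices $p_{i-1}$ and $p_i$---which is precisely the two-vertex overlap your block argument needs, obtained with edge-disjoint subwalks so the $2\|E\|$ bound is untouched. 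Separately, your proposal never treats the closed walks in $\mathcal{P}_1$ with a repeated vertex ($p_1=p_{t-1}$: a convex polygon with one vertex hanging inside it), where the geodesic machinery gives nothing; the paper handles this case by adding the single chord $p_0p_2$, whose length is controlled by the triangle inequality.
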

\begin{proof}
We prove the upper bound constructively. Consider the convex walk in ${\mathcal P}_1\cup {\mathcal P}_2$, defined above. We augment $G$ into $G'=(V,E\cup E^+)$ such that the vertex set of each convex walk in ${\mathcal P}_1\cup {\mathcal P}_2$ induces a 2-connected subgraph in $G'$; and every new edge in $E^+$ is part of one of these subgraphs.
Note that this implies that $G'$ is 2-connected: By Lemma~\ref{lem:2}(a), every vertex is part of a 2-connected subgraph; if two 2-connected subgraphs share two vertices, then their union is 2-connected. By Lemma~\ref{lem:2}(b) the union of the subgraphs induced by the convex chains is 2-connected. In the remainder of the proof, we consider a single convex walks $p\in {\mathcal P}_1\cup {\mathcal P}_2$.

\noindent \textbf{Case~1: $p=(p_0,\ldots, p_t)\in {\mathcal P}_1$.}
If the vertices $p_0,\ldots, p_{t-1}$ are distinct, then $p$ is a cycle, and all vertices of the walk are part of a 2-connected component. Otherwise $p_1=p_{t-1}$. In this case, $(p_1,\ldots , p_{t-1})$ forms a convex polygon, whose interior contains $p_0=p_t$ but no other vertices. Consequently, $t\geq 4$, and the only cut vertex along the walk is $p_1$. Add the edge $p_0p_2$, where $\|p_0p_2\|\leq \|p_0p_1\|+\|p_1p_2\|\leq \|p\|$ by the triangle inequality (Fig.~\ref{fig:2-con-upperbound}(a)). As a result, the vertices of $p$ induce a 2-connected subgraph.

\noindent \textbf{Case~2: $p=(p_0,\ldots, p_t)\in {\mathcal P}_2$.} We decompose $p$ into edge-disjoint walks recursively as follows. If $\geo(p)$ does not contain any interior vertex of $p$, then we are done. Otherwise, let $H$ be the convex hull of $\{p_0,\ldots, p_t\}$ and let $p'=(p_i,\ldots, p_j)$ be the subchain along $\partial H$. By Lemma~\ref{lem:1}, the set of interior vertices of $\geo(p')$ does not contain any vertex of $p'$.
Starting from $p'$, successively append the edges of $p$ proceeding $p_i$ or following $p_j$ while the path $p'$ maintains the property that $p'$ contains no interior vertices of $\geo(p')$. Then recurse on any prefix or suffix path in $p\setminus p'$. We obtain a decomposition of $p$ into subpaths $p'$ such that
$p'\cup \geo(p')$ is a simple cycle (Fig.~\ref{fig:2-con-upperbound}(b)).
By Lemma~\ref{lem:1+}, any two such consecutive paths share two vertices. Consequently, the union of the cycles $p'\cup \geo(p')$ is a 2-connected graph.

Analogously to the proof of Theorem~\ref{thm:1}, we have $\|E^+\|\leq 2\|E\|$, and the same lower bound construction
shows that this bound is the best possible.
\end{proof}

\begin{figure}[t]
	\centering
	\def\svgwidth{.9\columnwidth}
\begingroup%
  \makeatletter%
  \providecommand\color[2][]{%
    \errmessage{(Inkscape) Color is used for the text in Inkscape, but the package 'color.sty' is not loaded}%
    \renewcommand\color[2][]{}%
  }%
  \providecommand\transparent[1]{%
    \errmessage{(Inkscape) Transparency is used (non-zero) for the text in Inkscape, but the package 'transparent.sty' is not loaded}%
    \renewcommand\transparent[1]{}%
  }%
  \providecommand\rotatebox[2]{#2}%
  \ifx\svgwidth\undefined%
    \setlength{\unitlength}{267.60000634bp}%
    \ifx\svgscale\undefined%
      \relax%
    \else%
      \setlength{\unitlength}{\unitlength * \real{\svgscale}}%
    \fi%
  \else%
    \setlength{\unitlength}{\svgwidth}%
  \fi%
  \global\let\svgwidth\undefined%
  \global\let\svgscale\undefined%
  \makeatother%
  \begin{picture}(1,0.2037577)%
    \put(0,0){\includegraphics[width=\unitlength,page=1]{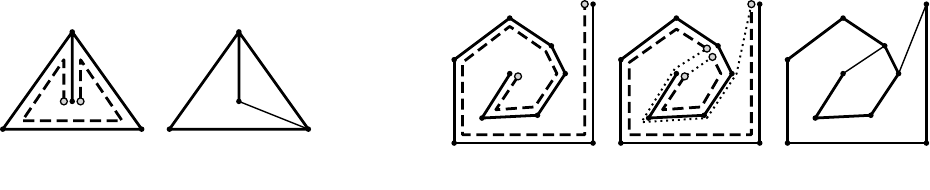}}%
    \put(0.14552631,0.0058292){\color[rgb]{0,0,0}\makebox(0,0)[lb]{\smash{(a)}}}%
    \put(0.72101208,0.0058292){\color[rgb]{0,0,0}\makebox(0,0)[lb]{\smash{(b)}}}%
  \end{picture}%
\endgroup%
	\caption{Examples for (a) case 1 and (b) case 2. }
	\label{fig:2-con-upperbound}
\end{figure}

\section{Algorithms for Connectivity Augmentation from 1 to 2}\label{sec:DP}

Let $F$ be a face of a PSLG $G$, and let $W_F=(p_0, \ldots, p_n)$, $p_0=p_{n}$ be the (closed) facial walk of $F$.
We define the graph $G_F=(V_F,E_F)$, where $V_F$ is the set of vertices in $W_F$ and $E_F=\{\{p_i,p_{i+1}\} \text{ where } i\in\{1,\ldots,n\}\}$. Given a walk $W_{s,t}=(p_s,\ldots,p_t)$ contained in $W_F$, a vertex $v_c$ is a \emph{cut vertex relative to $W_{s,t}$} if it appears more than once in $W_{s,t}$. For every pair $1\leq i<j\leq n$, we introduce a weight function, corresponding to the feasibility of an edge $p_ip_j$. Let $f(i,j)=\|p_ip_j\|$ if the line segment $p_ip_j$ does not cross any edge in $E_F$, lies in the face $F$ and in the wedges $\angle p_{i-1}p_ip_{i+1}$ and $\angle p_{j-1}p_jp_{j+1}$; and let $f(i,j)=\infty$ otherwise. Note that two feasible edges, $\{p_i,p_j\}$ and $\{p_{i'}, p_{j'}\}$, do not cross if their indices do not interleave (i.e., if $1\leq i<j\leq i'<j'\leq n$).

We present a dynamic programming algorithm $A$ that finds a set $E^+$ of edges of minimum total weight such that $(V_F,E_F\cup E^+)$ is a 2-connected PSLG.
We call an optimal solution $E_{OPT}$.
\begin{remark}\label{rem:separation}
Every edge in $E_{OPT}$ lies in the face $F$. Indeed, suppose $(V_F,E_F\cup E^+)$ is a 2-connected PSLG and $\{p_i,p_j\}\in E^+$ is outside $F$. Then $p_i$ and $p_j$ are part of a simple cycle formed by some edges of $G_F$, and $(V_F,E_F\cup (E^+\setminus\{p_i,p_j\}))$ is also 2-connected, showing that $E^+$ is not optimal.
\end{remark}
By Remark~\ref{rem:separation}, each face of the input can be treated independently. If we insert an edge in face $F$, then it decomposes $F$ into two faces $F_1$ and $F_2$ that can be considered independently. However, defining subproblems in terms of faces might generate an exponential number of subproblems. Instead we define our subproblems in term of continuous intervals of the facial walk of $F$.

\smallskip
We characterize an optimal solution $E_{OPT}$ for $G_F$ in terms of local properties of the subproblems $W_{s,t}$.
Let $p_c$ be a cut vertex with respect to a walk $W_{s,t}$. The vertices between two consecutive occurrences of $p_c$ in $W_{s,t}$ are called \emph{descendants} of $p_c$. A \emph{non-descendant} of $p_c$ in $W_{i,j}$ is a vertex in $W_F$ that is neither a descendant in $W_{i,j}$ nor equal to $p_c$. The \emph{$k$-th group} of descendants is the set of vertices between the $k$-th and $(k+1)$-st occurrence of $p_c$.
A set $E'$ of feasible edges \emph{satisfies} a group if there is a cycle in the graph $G'_F=(V_F\cup E_F\cup E')$ that contains a descendant in that group and a non-descendant of $p_c$; and $E'$ satisfies a cut vertex $p_c$ if it satisfies all of its groups. If $(V_F,E_F\cup E^+)$ is a 2-connected PSLG, then $E^+$ satisfies all cut vertex in $W_{1,n}$.
Indeed, suppose $E^+$ does not satisfy a vertex $p_c$, then the deletion of $p_c$ would disconnect one of its groups of descendants from the rest of $G_F$, hence $p_c$ would be a cut vertex in $(V_F,E_F\cup E^+)$.

Let $C[s,t]$, $s\le t$, be the minimum weight of an edge set $E'$ that satisfies all groups of all cut vertices relative to $W_{s,t}$ and such that $\{p_i,p_j\}\in E'$, $i,j\in\{s,\ldots,t\}$ and $(V_F,E_F\cup E')$ is a PSLG.
Algorithm $A$ uses the following recursive relation to compute subproblems $C[s,t]$ and returns $C[1,n]$.

\begin{enumerate}
\item [(i)] If $W_{s,t}$ does not contain any cut vertex relative to $W_{s,t}$, then $C[s,t]=0$.

\item [(ii)] If $p_s=p_t$ and $s\neq t$, then $C[s,t]=\infty$.

\item [(iii)] If $p_s$ is not a cut vertex relative to $W_{s,t}$, then $C[s,t]=\min\{C[s+1,t]$,\\ and $\min_{k\in\{s+2,\ldots,t-1\}}\{ C[s,k]+C[k,t]+f(s,k)\}\}$.

\item [(iv)] If $p_s$ is a cut vertex relative to $W_{s,t}$, then let\\
    $X=\{\text{descendants of }p_s\text{ in }W_{s,t}\}\times
    \{\text{non-descendants of }p_s\text{ in }W_{s,t}\}$.\\
Set $C[s,t]=\min_{ (p_i,p_j)\in X}\{ C[s,i]+C[i,j]+C[j,t]+f(i,j)\}$.
\end{enumerate}


\begin{figure}[t]
\centering
	\def\svgwidth{0.8\columnwidth}
\begingroup%
  \makeatletter%
  \providecommand\color[2][]{%
    \errmessage{(Inkscape) Color is used for the text in Inkscape, but the package 'color.sty' is not loaded}%
    \renewcommand\color[2][]{}%
  }%
  \providecommand\transparent[1]{%
    \errmessage{(Inkscape) Transparency is used (non-zero) for the text in Inkscape, but the package 'transparent.sty' is not loaded}%
    \renewcommand\transparent[1]{}%
  }%
  \providecommand\rotatebox[2]{#2}%
  \ifx\svgwidth\undefined%
    \setlength{\unitlength}{203.06787205bp}%
    \ifx\svgscale\undefined%
      \relax%
    \else%
      \setlength{\unitlength}{\unitlength * \real{\svgscale}}%
    \fi%
  \else%
    \setlength{\unitlength}{\svgwidth}%
  \fi%
  \global\let\svgwidth\undefined%
  \global\let\svgscale\undefined%
  \makeatother%
  \begin{picture}(1,0.22557258)%
    \put(0,0){\includegraphics[width=\unitlength,page=1]{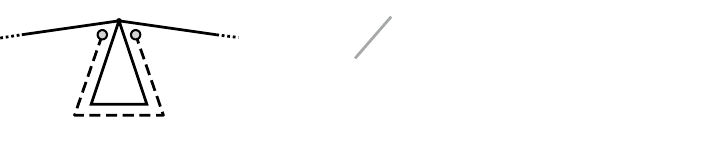}}%
    \put(0.09403438,0.16538865){\color[rgb]{0,0,0}\makebox(0,0)[lb]{\smash{$p_s$}}}%
    \put(0.20893672,0.16538865){\color[rgb]{0,0,0}\makebox(0,0)[lb]{\smash{$p_t$}}}%
    \put(0,0){\includegraphics[width=\unitlength,page=2]{dynamicprog.pdf}}%
    \put(0.50768916,0.19887505){\color[rgb]{0,0,0}\makebox(0,0)[lb]{\smash{$p_s$}}}%
    \put(0.43349221,0.05705054){\color[rgb]{0,0,0}\makebox(0,0)[lb]{\smash{$p_t$}}}%
    \put(0,0){\includegraphics[width=\unitlength,page=3]{dynamicprog.pdf}}%
    \put(0.45647476,0.13978147){\color[rgb]{0,0,0}\makebox(0,0)[lb]{\smash{$p_k'$}}}%
    \put(0.66884885,0.18311674){\color[rgb]{0,0,0}\makebox(0,0)[lb]{\smash{$p_c$}}}%
    \put(0,0){\includegraphics[width=\unitlength,page=4]{dynamicprog.pdf}}%
    \put(0.84649214,0.19887505){\color[rgb]{0,0,0}\makebox(0,0)[lb]{\smash{$p_s$}}}%
    \put(0.77229519,0.05705054){\color[rgb]{0,0,0}\makebox(0,0)[lb]{\smash{$p_t$}}}%
    \put(0,0){\includegraphics[width=\unitlength,page=5]{dynamicprog.pdf}}%
    \put(0.8697669,0.08856708){\color[rgb]{0,0,0}\makebox(0,0)[lb]{\smash{$p_{i'}$}}}%
    \put(0,0){\includegraphics[width=\unitlength,page=6]{dynamicprog.pdf}}%
    \put(0.95808963,0.0373527){\color[rgb]{0,0,0}\makebox(0,0)[lb]{\smash{$p_{j'}$}}}%
    \put(0.16095829,0.00207535){\color[rgb]{0,0,0}\makebox(0,0)[lb]{\smash{(a)}}}%
    \put(0.5352174,0.00207535){\color[rgb]{0,0,0}\makebox(0,0)[lb]{\smash{(b)}}}%
    \put(0.87008081,0.00207535){\color[rgb]{0,0,0}\makebox(0,0)[lb]{\smash{(c)}}}%
  \end{picture}%
\endgroup%
\caption{Examples of cases (ii) (iii) and (iv) of the dynamic programming.}
\label{fig:dynamicprog}
\end{figure}

\noindent\textbf{Correctness.}
We show that $A$ correctly computes $C[s,t]$ and that $C[1,n]$ corresponds to an optimal edge set $E'$.
The base case (i) is trivial.

\noindent
In case (ii), $p_s=p_t$ is a cut vertex of $W_F$, and all other vertices in $W[s,t]$ are descendants of $p_s$. Therefore,
there is no edge incident to a non-descendant of $p_s$ in $W_{s,t}$, and $p_s$ cannot be satisfied in $W_{s,t}$ (Fig.~\ref{fig:dynamicprog}(a)).

\noindent
In case (iii), since $p_s$ does not have descendants, the set $E'$ corresponding to $C[s,t]$
either has no edge incident to $p_s$ or it has an edge between $p_s$ and some descendant $p_k$ of a cut vertex $p_c$ in $W_{s,t}$. In the former case, we have $C[s,t]=C[s+1,t]$ since $E'$ satisfies all cut vertices relative to $W_{s+1,t}$.
In the latter, the edge $\{p_s,p_k\}$ creates a cycle satisfying the group of descendants containing $v_k$ for every cut vertex in $W_{c,k}$. It divides $F$ into two faces: $F_1$ (resp., $F_2$) whose facial walk contains $W_{s,k}$ (resp., $W_{k,t}$). Every group that still needs to be satisfied is either in $F_1$ or $F_2$.
If $E_1$ (resp., $E_2$) is the set with minimum weight between vertices in $W_{s,k}$ (resp., $W_{k,t}$)
that satisfies the groups of descendants in $F_1$ (resp., $F_2$), then $E'=E_1\cup E_2\cup\{\{s,k\}\}$.

\noindent
In case (iv), all non-descendants of $p_s$ in $W_{s,t}$ are in the set $\{p_{k+1},\ldots,p_t\}$, where $p_k$ is the last occurrence of $p_s$ in $W_{s,t}$. $E'$ must contain at least one edge in $X$ in order to satisfy $p_s$.
Let $\{p_i,p_j\}\in X$ be an edge that minimizes $i$, breaking ties by maximizing $j$.
Then, every edge $\{p_{i'},p_{j'}\}\in E'$ such that $(p_{i'},p_{j'})\in X$ is between vertices in $W_{i,j}$.
In particular, there exist no edge in $E'$ between a vertex in $W_{s,i}$ and $W_{j,t}$.
We can partition $E'$ into $E_1$, $E_2$, and $E_3$ such that they each contain only edges between vertices in $W_{s,i'}$, $W_{i',j'}$, and $W_{j',t}$ respectively, each of them being the minimum-weight set that satisfies the groups of descendants in their respective subproblems. The edges in $E_1$, $E_2$, $E_3$ cannot cross since they correspond to edge-disjoint walks in $W_F$. Thus, we have $C[s,t]=C[s,i']+C[i',j']+C[j',t']+f(i',j')$.

Let $E'$ be the edge set corresponding to $C[1,n]$. Since $E'$ satisfies all cut vertices in $W_{1,n}$, which corresponds to the cut vertices of $G_F$, the graph $(V_F,E_F\cup E')$ is 2-connected. Since $E'$ is a minimum-weight edge set that satisfies all groups of $W_{1,n}$ by definition, we have $\|E'\|=\|E_{OPT}\|$.

\noindent\textbf{Running time.}
The feasible edge weights $f(i,j)$ can be precomputed in $O(n^2)$ time~\cite{CW15,OW88} for all $i,j\in\{1,\ldots,n\}$.
There are $O(n^2)$ subproblems, and each can be computed in $O(n^2)$ time since it depends on $O(n^2)$ smaller subproblems.
Hence algorithm $A$ takes $O(n^4)$ time.

\begin{theorem}\label{thm:alg-PVCA}
For a connected PSLG $(V,E)$, an edge set $E^+$ of minimum  weight such that $(V,E\cup E^+)$ is 2-connected can be computed in $O(|V|^4)$ time.
\end{theorem}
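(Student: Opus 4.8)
The plan is to reduce the global augmentation problem to independent per-face subproblems, each solved by the dynamic program $A$ developed in this section, and then to bound the total running time. First I would argue that every feasible augmenting edge lies in the interior of a single face of $G$: since $G'=(V,E\cup E^+)$ must be a PSLG, each new edge is disjoint from the existing edges, so its relative interior is contained in one open face $F$ of $G$. This partitions any feasible augmentation as $E^+=\bigsqcup_{F} E^+_F$, where $E^+_F$ is the set of new edges embedded in face $F$; moreover chords placed in $F$ only subdivide $F$ and leave every other face untouched.

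Next I would establish that $G'$ is 2-connected if and only if $(V_F, E_F\cup E^+_F)$ is 2-connected for every face $F$. Using the characterization from Section~\ref{sec:prelim} that a vertex of a connected PSLG is a cut vertex exactly when it occurs more than once on some facial walk, together with the observation that the facial walks of $G'$ are precisely the facial walks of the subdivisions of the individual faces $F$ by $E^+_F$, global 2-connectivity is equivalent to all these refined facial walks being simple, i.e.\ to each local graph being 2-connected. Consequently the per-face problems are fully independent: concatenating, for each $F$, the minimum-weight local solution produced by $A$ yields a feasible global augmentation (2-connected by the construction behind Theorem~\ref{thm:2}), and conversely any global optimum restricts on each $F$ to a feasible local solution, so $\|E_{OPT}\|=\sum_F \|E^+_{F}\|=\sum_F C_F[1,n_F]$, where $C_F$ denotes the table computed by $A$ on the facial walk $W_F$ of length $n_F$. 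Hence running $A$ on every face and taking the union is optimal.

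Finally I would bound the running time. Applying $A$ to a face $F$ whose facial walk has $n_F$ vertices costs $O(n_F^4)$ time, including the $O(n_F^2)$ precomputation of the feasibility weights $f(\cdot,\cdot)$. Because every edge of $E$ occurs exactly twice among all facial walks, $\sum_F n_F = 2|E|$, and planarity gives $|E|\le 3|V|-6$, so $\sum_F n_F = O(|V|)$. Since the $n_F$ are nonnegative, $\sum_F n_F^4 \le \bigl(\sum_F n_F\bigr)^4 = O(|V|^4)$, which dominates the $O(|E|)$ time to extract the faces and their facial walks; this yields the claimed $O(|V|^4)$ bound.

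The main obstacle I expect is the equivalence in the second paragraph: carefully arguing that the collection of local 2-connectivity conditions is simultaneously necessary and sufficient for global 2-connectivity, with no cut vertex of $G'$ escaping detection in every incident face and none being spuriously imposed by the local problem. The delicate point is that a cut vertex of $G$ may appear on the facial walks of several faces whose ``other sides'' lie in distinct faces, so one must verify that resolving its repeated occurrences face-by-face removes it as a cut vertex of $G'$ exactly, neither more nor less than needed — which is precisely where the facial-walk characterization of cut vertices together with Remark~\ref{rem:separation} does the real work.
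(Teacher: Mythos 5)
Your proposal is correct and follows essentially the same route as the paper: the paper's proof likewise runs the dynamic program $A$ independently on every face, justifies combining the per-face optima into a global optimum by the face-separation argument (Remark~\ref{rem:separation}), and obtains $O(|V|^4)$ time from the fact that the total length of all facial walks is $O(|V|)$. Your extra details --- the local-global equivalence argued via the facial-walk characterization of cut vertices, and the bound $\sum_F n_F^4 \le \bigl(\sum_F n_F\bigr)^4 = O(|V|^4)$ --- merely spell out what the paper's three-sentence proof leaves implicit.
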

\begin{proof}
We identify the faces of $(V,E)$ and run algorithm $A$ in all faces.
By Remark~\ref{rem:separation},
the union of the optimal solutions for all faces is the optimal solution for $G$.
Since the size of the union of all facial walks is $O(|V|)$, algorithm $A$ takes $O(|V|^4)$ time.
\end{proof}

\begin{theorem}\label{thm:alg-PECA}
For a connected PSLG $(V,E)$, an edge set $E^+$ of minimum weight such that $(V,E\cup E^+)$ is 2-edge-connected can be computed in $O(|V|^4)$ time.
\end{theorem}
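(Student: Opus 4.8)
The plan is to mirror the dynamic programming algorithm $A$ of Theorem~\ref{thm:alg-PVCA}, replacing the role of \emph{cut vertices} by that of \emph{bridges}. Recall that a graph is 2-edge-connected if and only if it contains no bridge, equivalently every edge lies on a cycle, and that an edge of $G$ is a bridge precisely when it occurs twice in the same facial walk. Thus the only obstructions to 2-edge-connectivity inside a face $F$ are the edges of $W_F$ that are traversed in both directions, and these play exactly the role that repeated vertices played in the vertex-connectivity algorithm.

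First I would establish the edge-connectivity analog of Remark~\ref{rem:separation}: if $(V_F,E_F\cup E^+)$ is 2-edge-connected and some $\{p_i,p_j\}\in E^+$ lies outside $F$, then $p_i$ and $p_j$ already lie on a common cycle of $G_F$, so deleting this edge preserves 2-edge-connectivity and $E^+$ was not optimal. Hence every edge of an optimal solution lies inside a face, inserting an edge into $F$ splits it into two faces treated independently, and it suffices to solve each face separately and take the union.

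Next I would define an algorithm $A'$ on the facial walk $W_F=(p_0,\dots,p_n)$ with subproblems $C'[s,t]$ equal to the minimum weight of a feasible, noncrossing edge set $E'$ with indices in $\{s,\dots,t\}$ that places every bridge relative to $W_{s,t}$ on a cycle of $(V_F,E_F\cup E')$. A bridge relative to $W_{s,t}$ manifests as a pattern with $\{p_i,p_{i+1}\}=\{p_{j-1},p_j\}$ (that is, $p_i=p_j$ and $p_{i+1}=p_{j-1}$) bounding a \emph{pendant part} $p_{i+1},\dots,p_{j-1}$, and it is satisfied exactly when $E'$ joins this pendant part to a vertex outside it, creating a cycle through the bridge. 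The recurrence then has the same shape as cases (i)--(iv): if $W_{s,t}$ is bridgeless, $C'[s,t]=0$; the analog of case (ii) returns $\infty$ when the interval forms a pendant part with no vertex outside it to connect to; and when $p_s$ is, or is not, an endpoint of an unsatisfied bridge, we either skip $p_s$ or guess the leftmost added edge $\{p_i,p_j\}$ crossing from the pendant part to its complement and set $C'[s,t]=C'[s,i]+C'[i,j]+C'[j,t]+f(i,j)$. Since noncrossing feasible edges correspond to edge-disjoint subwalks of $W_F$, this interval decomposition is valid, and the correctness proof is identical in structure to that for $A$.

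The step I expect to require the most care is pinning down the correct ``satisfies a bridge'' condition and the associated case split. Because 2-edge-connectivity is strictly weaker than 2-vertex-connectivity, the recurrence must branch only on bridges and must \emph{not} force the resolution of cut vertices that are not bridge endpoints; otherwise it would add superfluous edges and return a suboptimal (too large) $E^+$. Once the recurrence is set up correctly, the running-time analysis carries over verbatim: the feasible weights $f(i,j)$ are precomputed in $O(n^2)$ time, there are $O(n^2)$ subproblems, each evaluated in $O(n^2)$ time, and summing over all faces (whose total walk size is $O(|V|)$) yields the claimed $O(|V|^4)$ bound.
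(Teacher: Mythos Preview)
Your proposal is correct and follows essentially the same approach as the paper: both replace cut vertices by bridges (edges appearing twice in $W_{s,t}$), define descendants as the vertices enclosed between the two occurrences, and reuse the interval-based dynamic program with the same case structure and running-time analysis. Your treatment is in fact somewhat more explicit than the paper's proof, which simply writes down the modified recurrence and states that ``the proof of correctness and runtime analysis are similar to algorithm $A$.''
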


\begin{proof}
	We modify algorithm $A$ into algorithm $B$, which computes the set $E^+$.
	Let an edge $\{p_c,p_{c+1}\}$ be a \emph{bridge relative to $W_{s,t}$}, $s\le c<t$, if it appears twice in $W_{s,t}$ (recall that an edge can only appear up to two times in a facial walk).
	Assume that $\{p_c,p_{c+1}\}$ and $\{p_{c'},p_{c'+1}\}$ are the first and second occurrences.
	Call $\{p_{c+1},\ldots,p_{c'}\}$ the descendants of $\{p_c,p_{c+1}\}$ in $W_{s,t}$,
    and all other vertices non-descendants. Replace the recurrence relation as follows.
	\begin{enumerate}
		\item [(i)] If $W_{s,t}$ does not contain a bridge relative to $W_{s,t}$, then $C[s,t]=0$.
		
		\item [(ii)] If $\{p_s,p_{s+1}\}$ is not a bridge relative to $W_{s,t}$, then $C[s,t]=\min\{C[s+1,t],$ $\min_{k\in\{s+2,\ldots,t-1\}}\{ C[s,k]+C[k,t]+f(s,k)\}\}$.
		
		\item [(iv)] If $\{p_s,p_{s+1}\}$ is a bridge relative to $W_{s,t}$, then let $X=\{$descendants of $\{p_s,p_{s+1}\}$ in $W_{s,t}\}\times
		\{\text{non-descendants of }$ $\{p_s,p_{s+1}\}$ in $W_{s,t}\}$.
		Set $C[s,t]=\min_{(p_i,p_j)\in X}\{ C[s,i]+C[i,j]+C[j,t]+f(i,j)\}$.
	\end{enumerate}
	The proof of correctness and runtime analysis are similar to algorithm $A$.
\end{proof}

Theorems~\ref{thm:alg-PVCA} and \ref{thm:alg-PECA} extend to any nonnegative weight function. In particular, a minimum cardinality edge set $E^+$ can also be computed in $O(|V|^4)$ time.

\section{Hardness of Connectivity Augmentation from 0 to 2}\label{sec:hardness}


\begin{theorem}\label{thm:hardness}
Given a (disconnected) PSLG $G=(V,E)$ and a positive integer $k$, deciding whether there exists an edge set $E^+$ such that $\|E^+\|\le k$ and $(V,E\cup E^+)$ is a 2-edge-connected PSLG is NP-hard.
\end{theorem}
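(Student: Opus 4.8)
The plan is to establish NP-hardness by a polynomial reduction from \textsc{Planar 3-SAT}, which remains NP-complete and whose variable–clause incidence graph admits a planar rectilinear embedding. Given a planar 3-CNF formula $\varphi$, I would construct a disconnected PSLG $G=(V,E)$ together with a budget $k$ such that $G$ admits a 2-edge-connected augmentation $E^+$ with $\|E^+\|\le k$ if and only if $\varphi$ is satisfiable. The overall geometry follows the planar embedding of $\varphi$: variable gadgets are placed at the variable locations, clause gadgets at the clause locations, and wire gadgets route along the embedded edges. Because the incidence graph is planar, the intended augmenting edges can be drawn without crossings, so the noncrossing requirement of a PSLG is met by the honest solution.

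Each gadget is a collection of tiny disconnected components; for concreteness I would use short segments, each contributing two leaves that must acquire a second incidence in any 2-edge-connected augmentation. First I would design a \emph{wire} gadget as a ladder of such components admitting exactly two minimum-length closures, geometrically offset from one another, which encode a truth value and propagate it unchanged along the wire. The \emph{variable} gadget is a closed loop of wire that forces a single global choice of offset (hence a single truth value) for all occurrences of that variable, with negation realized by a half-step phase shift. The \emph{clause} gadget is a three-way junction whose components can be knit into the incident structure at the target cost if and only if at least one of the three incident literal wires arrives in its satisfying offset; if all three arrive unsatisfied, every 2-edge-connected completion of the junction costs strictly more.

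I would calibrate Euclidean lengths so that the two states of every wire and variable gadget have identical cost, using thin symmetric ladders, and set the global budget $k$ to the sum of the forced minimum costs of all gadgets plus the minimum clause-junction cost under the assumption that each clause is satisfied. With this calibration, any augmentation of cost $\le k$ must realize each gadget at its minimum, which pins down a consistent assignment of offsets across the whole layout; the budget is then achievable precisely when every clause junction is satisfiable, i.e. when $\varphi$ is satisfiable. Conversely, a satisfying assignment yields an honest noncrossing augmentation meeting the budget, so the reduction is correct, and it is clearly polynomial.

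The main obstacle is the global optimality argument rather than the local gadget design: I must show that no \emph{unintended} 2-edge-connected augmentation beats the intended cost while violating the logical constraints. Concretely, I would rule out (i) Euclidean shortcuts that connect distant components more cheaply than the ladder edges, precluded by spacing gadgets far apart relative to intra-gadget distances; (ii) augmentations that exploit the looseness of 2-edge-connectivity (as opposed to Hamiltonicity) to eliminate a bridge through an unexpected cycle, controlled by ensuring that each leaf's only cheap second incidence lies within its own gadget; and (iii) crossing interactions across gadgets, handled by the planar layout with generous separation. Proving that these exclusions hold simultaneously, so that the minimum augmentation length equals $k$ exactly when $\varphi$ is satisfiable, is the technically delicate part of the reduction.
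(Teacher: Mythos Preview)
Your proposal is correct and follows essentially the same approach as the paper: a reduction from a planar SAT variant using variable, wire, and clause gadgets built from leaves whose minimum-cost 2-edge-connected augmentation encodes a truth assignment, with a parity/budget argument at the clause gadget forcing satisfaction. The paper reduces from \textsc{Planar-Monotone-3SAT} rather than general \textsc{Planar 3-SAT}, which lets it dispense with your half-step negation gadget (positive and negative clauses simply lie on opposite sides of the $x$-axis), but the overall structure, the leaf-matching cost calibration, and the tight-budget argument match yours.
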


\begin{proof}
	We reduce from \textsc{Planar-Monotone-3SAT}, which is NP-complete~\cite{BK12}.
	An instance of this problem is given by a plane bipartite graph between $n$ \emph{variables} and $m$ \emph{clauses} such that the variables are embedded on the $x$-axis, no edge crosses the $x$-axis and every clause has degree 2 or 3.
	A clause is called \emph{positive} if it is embedded on the upper half-plane and \emph{negative} otherwise.
	\textsc{Planar-Monotone-3SAT} asks if there is an assignment from $\{\texttt{true},\texttt{false}\}$ to variables such that each positive (resp., negative) clause is adjacent to at least one \texttt{true} (resp., \texttt{false}) variable.
	Given such an instance we build a PSLG $G=(V,E)$ as follows.
	We divide the reduction into variable, wire and clause gadgets.

	\begin{figure}[htp]
	\centering
		\def\svgwidth{.9\columnwidth}
\begingroup%
  \makeatletter%
  \providecommand\color[2][]{%
    \errmessage{(Inkscape) Color is used for the text in Inkscape, but the package 'color.sty' is not loaded}%
    \renewcommand\color[2][]{}%
  }%
  \providecommand\transparent[1]{%
    \errmessage{(Inkscape) Transparency is used (non-zero) for the text in Inkscape, but the package 'transparent.sty' is not loaded}%
    \renewcommand\transparent[1]{}%
  }%
  \providecommand\rotatebox[2]{#2}%
  \ifx\svgwidth\undefined%
    \setlength{\unitlength}{479.17813038bp}%
    \ifx\svgscale\undefined%
      \relax%
    \else%
      \setlength{\unitlength}{\unitlength * \real{\svgscale}}%
    \fi%
  \else%
    \setlength{\unitlength}{\svgwidth}%
  \fi%
  \global\let\svgwidth\undefined%
  \global\let\svgscale\undefined%
  \makeatother%
  \begin{picture}(1,0.67445368)%
    \put(0,0){\includegraphics[width=\unitlength,page=1]{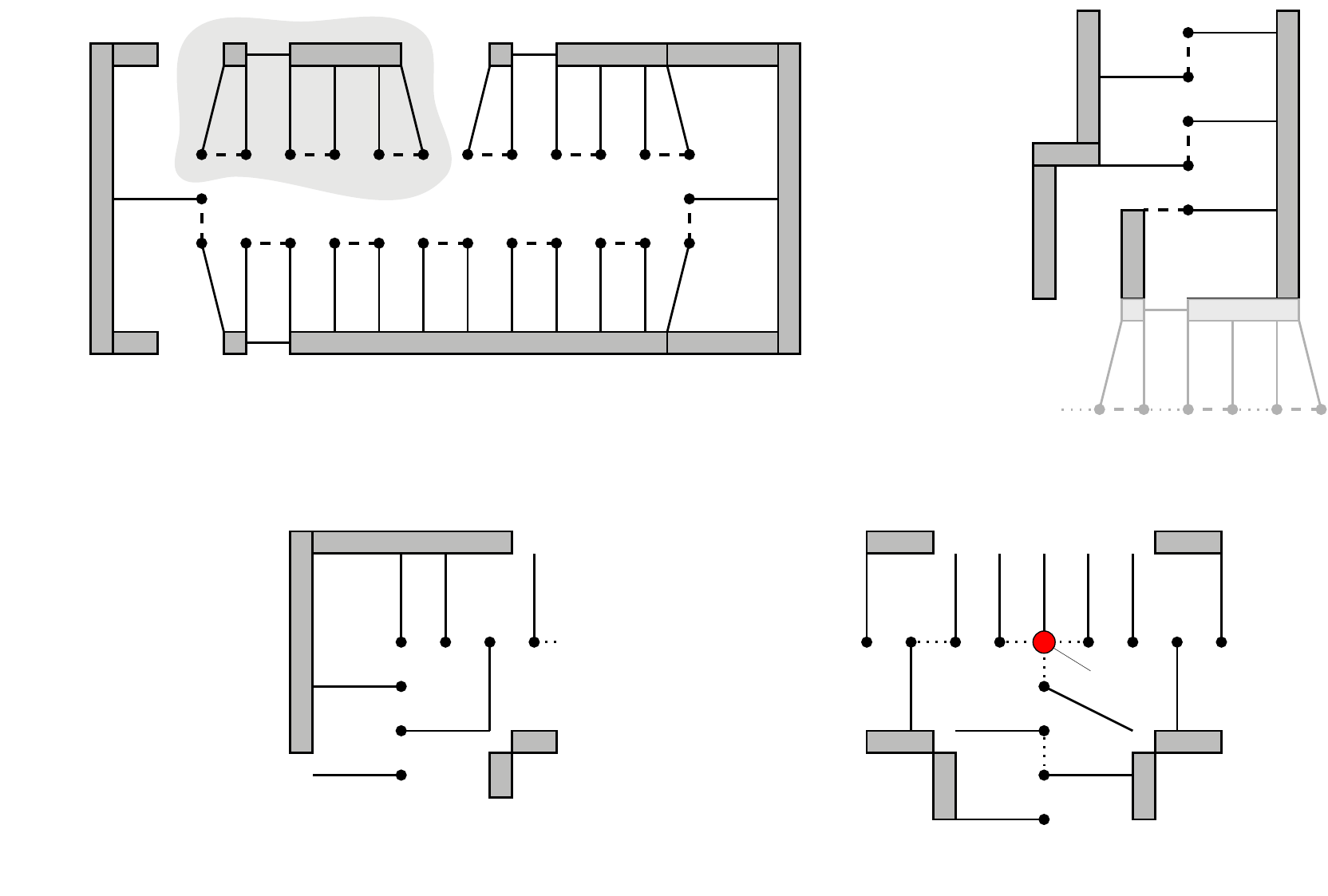}}%
    \put(0.15789859,0.66727207){\color[rgb]{0,0,0}\makebox(0,0)[lb]{\smash{$e_{i,r}$}}}%
    \put(0.7739534,0.396809){\color[rgb]{0,0,0}\makebox(0,0)[lb]{\smash{$e_{i,r}$}}}%
    \put(0,0){\includegraphics[width=\unitlength,page=2]{reduction.pdf}}%
    \put(0.29932747,0.3118339){\color[rgb]{0,0,0}\makebox(0,0)[lb]{\smash{(a)}}}%
    \put(0.87515493,0.3118339){\color[rgb]{0,0,0}\makebox(0,0)[lb]{\smash{(b)}}}%
    \put(0.29097977,0.00297177){\color[rgb]{0,0,0}\makebox(0,0)[lb]{\smash{(c)}}}%
    \put(0.77164448,0.00297177){\color[rgb]{0,0,0}\makebox(0,0)[lb]{\smash{(d)}}}%
    \put(0,0){\includegraphics[width=\unitlength,page=3]{reduction.pdf}}%
    \put(0.88375018,0.4835537){\color[rgb]{0,0,0}\makebox(0,0)[lb]{\smash{$l_{i,r}$}}}%
    \put(0,0){\includegraphics[width=\unitlength,page=4]{reduction.pdf}}%
    \put(-0.00070631,0.48362432){\color[rgb]{0,0,0}\makebox(0,0)[lb]{\smash{$e_{i}$}}}%
    \put(0,0){\includegraphics[width=\unitlength,page=5]{reduction.pdf}}%
    \put(0.6339468,0.48362432){\color[rgb]{0,0,0}\makebox(0,0)[lb]{\smash{$e_{i+1}$}}}%
    \put(0,0){\includegraphics[width=\unitlength,page=6]{reduction.pdf}}%
    \put(0.82544268,0.15465722){\color[rgb]{0,0,0}\makebox(0,0)[lb]{\smash{$c_{r}$}}}%
  \end{picture}%
\endgroup%
	\caption{Dotted and dashed lines represent \texttt{true} and \texttt{false} assignments respectively.
	Shaded rectangles represent cycles.
	(a) Variable gadget. (b) Connection between variable and wire gadgets. (c)~Turn in a wire gadget. (d) Clause gadget.}
	\label{fig:reduction}
	\end{figure}
	
	\noindent\textbf{Variable gadget.} A gadget of a variable adjacent to two positive and one negative clause is shown in Fig.~\ref{fig:reduction}(a).
	The gray boxes and small disks represent 2-connected components and leaves respectively.
	Each leaf requires at least one edge for the augmentation and the closest node from each of them is 1 unit apart.
	A pair of leaves can possibly share an edge and the $i$-th gadget for each $i\in\{1,\ldots,n\}$ has an even number of leaves $t_i$, hence the gadget requires at least $t_i/2$ length.
	There are exactly two possible ways to achieve this bound and they encode the \texttt{true}/\texttt{false} value of the variable.
	Fig.~\ref{fig:reduction}(a) can be generalized to other variables by repeating, omitting and changing the length and number of leaves of the component highlighted in the figure.
	
	\noindent\textbf{Wire gadget.} The wire gadget that connects the $i$-th variable to the $r$-th clause, denoted as the $(i,r)$-wire, share the edge $e_{i,r}$ with the $i$-th variable gadget and its first leaf is called $l_{i,r}$ (Fig.~\ref{fig:reduction}(b)).
	Turns, if needed, can be done as in Fig.~\ref{fig:reduction}(c).
	
	\noindent\textbf{Clause gadget.}
	The $r$-th clause gadget contains a special leaf called $c_r$ shown in red in Fig.~\ref{fig:reduction}(d) that is located at an odd distance $L_1$ from at most three leaves, say $l_{i,r}$, $l_{j,r}$, and $l_{k,r}$.
	If the clause is incident to two variables, use the turn shown in Fig.~\ref{fig:reduction}(c)
     as a clause gadget, naming the upper left leaf $c_r$.
	If the $r$-th clause is positive, we place all $l_{i,r}$ at an even distance from each other,
    which makes the position of $c_r$ always realizable.
	Use reflections through the $x$-axis for negative clauses.

	Let $t_{i,r}$ denote the number of leaves of the  $(i,r)$-wire.
	We set $$k=\sum_{i=1}^{n} t_i/2 + \sum_{\text{all }(i,r)\text{-wires}} (\lceil t_{i,r}/2\rceil +1).$$
	
	Assume that the \textsc{Planar-Monotone-3SAT} instance have a satisfying assignment.
	We build $E^+$ as follows.
	For each variable assigned \texttt{false} (resp., \texttt{true}) add the edges shown as dashed (resp., dotted) lines of Fig.~\ref{fig:reduction}(a) to $E^+$.
	For each $(i,r)$-wire add to $E^+$ all dashed lines shown in Fig.~\ref{fig:reduction}(b), (c) and the closest dashed line in Fig.~\ref{fig:reduction}(d) from the $(i,r)$-wire if the $i$-th variable is assigned \texttt{false} and the $r$-th clause is positive; or if the $i$-th variable is assigned \texttt{true} and the $r$-th clause is negative.
	Add the dotted lines otherwise.
	We obtain $E^+$ such that $\|E^+\|=k$ and the graph $(V,E\cup E^+)$ is 2-edge-connected.
	
Assume that there exists $E^+$ such that $\|E^+\|=k$ and the graph $(V,E\cup E^+)$ is 2-edge-connected.
	We call \emph{$(i,r)$-leaves} the set of all leaves in the $(i,r)$-wire and the two leaves in the clause gadget adjacent to it.
	Since the number of $(i,r)$-leaves is odd and the closest point from any such leaf is at least 1 unit away, the minimum length required to have at least an edge in $E^+$ incident to each leaf is $\lceil t_{i,r}/2\rceil +1$.
	Since $k$ is the sum of all such lower bounds, then: (i) the subset of $E^+$ that is incident to a $(i,r)$-leaf must have exactly $\lceil t_{i,r}/2\rceil +1$ unit length edges; and (ii) the subset of $E^+$ that is incident to the $i$-th variable gadget must be either the set of dashed or dotted lines in Fig.~\ref{fig:reduction}(a).
	Assume that $E^+$ contains the dotted lines in Fig.~\ref{fig:reduction}(a)  in the $i$-th variable gadget.
	Then, for a positive clause $r$, $E^+$ must contain an edge between $l_{i,r}$ and a point in the maximal 2-connected component of $G$ that contains $e_{i,r}$, or else the bridge in $G$ that connects such 2-connected component will remain a bridge in $(V,E\cup E^+)$.
	Therefore, all other $(i,r)$-leaves must be matched in order to satisfy (i).
	Since $(V,E\cup E^+)$ is 2-edge-connected, $c_r$ is connected to some other leaf.
	Then, if it is connected to a $(i,r)$-leaf, the $i$-th variable gadget uses the dotted edges (\texttt{true}).
	By applying the symmetric argument for negative clauses, all such clauses must be incident to a variable gadget using dashed edges (\texttt{false}).
	Then we have also a satisfying assignment for the \textsc{Planar-Monotone-3SAT} instance.
\end{proof}


\begin{corollary}
Given a (disconnected) PSLG $G=(V,E)$ and $k>0$, finding a set $E^+$ such that $\|E^+\|\le k$ and $(V,E\cup E^+)$ is a 2-connected PSLG is NP-hard.
\end{corollary}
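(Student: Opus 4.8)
The plan is to reuse, verbatim, the reduction and the target length $k$ from the proof of Theorem~\ref{thm:hardness}, and to argue that on these instances the 2-connected and 2-edge-connected augmentation questions have the same answer. One direction is immediate: since every 2-connected PSLG is 2-edge-connected, any $E^+$ with $\|E^+\|\le k$ for which $(V,E\cup E^+)$ is 2-connected is in particular a 2-edge-connected augmentation of length at most $k$; by the analysis already carried out for Theorem~\ref{thm:hardness}, the existence of such an $E^+$ yields a satisfying assignment of the underlying \textsc{Planar-Monotone-3SAT} instance. Thus the soundness direction comes for free.

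For the converse I would show that the concrete edge set $E^+$ built in the proof of Theorem~\ref{thm:hardness} from a satisfying assignment---which there was verified to be 2-edge-connected of length exactly $k$---is in fact 2-connected. The lower bound used to fix $k$ (each variable-gadget leaf and each of the odd number of $(i,r)$-leaves sits at distance at least $1$ from every other vertex and can be removed only by a unit pairing edge) already determines the combinatorial structure of every length-$k$ solution up to the binary \texttt{true}/\texttt{false} choice per variable. I would then inspect the variable, wire, and clause gadgets in turn and check that, for the chosen assignment, each added (dashed or dotted) edge closes a cycle that meets the incident 2-connected block (the gray box) in two \emph{distinct} vertices. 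This guarantees that no former bridge endpoint becomes a cut vertex and that consecutive blocks along a wire share the two endpoints of a common cycle, so the gadgets glue together into a single 2-connected graph rather than merely a bridgeless one.

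The main obstacle I anticipate is exactly this gadget-by-gadget verification: ruling out the possibility that pairing the leaves in the prescribed way leaves some single vertex whose deletion disconnects a pendant appendage. Concretely, one must confirm that every augmentation edge creates a cycle through two different attachment vertices of its adjacent block, rather than a cycle that revisits one articulation point. The delicate case is the special clause leaf $c_r$, whose single incident augmentation edge must be checked to close a cycle passing through two distinct vertices of the clause block; once this and the analogous checks for the variable and wire gadgets succeed, the constructed $E^+$ witnesses a 2-connected augmentation of length exactly $k$, establishing the claimed equivalence and hence the NP-hardness.
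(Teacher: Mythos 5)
Your proposal takes essentially the same route as the paper: the paper also reuses the reduction and the value of $k$ from Theorem~\ref{thm:hardness} unchanged, gets the soundness direction for free from the fact that 2-connectivity implies 2-edge-connectivity, and for the converse simply asserts that on these instances any length-$\le k$ augmentation achieving 2-edge-connectivity is in fact 2-connected. The gadget-by-gadget verification you outline is precisely the content the paper compresses into that one-sentence assertion, so your plan is correct and aligned with the paper's proof.
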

\begin{proof}
	The same reduction used in the proof of Theorem~\ref{thm:hardness} also works for 2-connectivity.
	Notice that the length required by leaves is the same and any $E^+$, $\|E^+\|\le k$, that augments $G$ to 2-edge-connected also achieves 2-connectivity.
\end{proof}

\section{Dynamic Plane Graphs}\label{sec:dynamic}

\begin{theorem}\label{thm:dynamic}
Let $G=(V, E)$ be a connected PSLG with $|V|\geq 3$ and
no three collinear vertices.
Then there exists a sequence of edge insertion and deletion operations that transforms $G$ into a planar straight-line cycle $G'=(V, E')$ such that $\|E'\|\leq 2\|\MST(V)\|$ and that every intermediate graph is a connected planar straight-line graph of weight at most $\|E\|+\|\MST(V)\|$.  These bounds are the best possible.
\end{theorem}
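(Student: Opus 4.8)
The plan is to fix the target once, establish two budget inequalities, and then reach the target through two phases. Write $T=\MST(V)$. Since $G$ is connected it contains a spanning tree using only edges of $E$, so $\|T\|=\|\MST(V)\|\le\|E\|$; consequently $2\|\MST(V)\|\le\|E\|+\|\MST(V)\|$ (the \emph{terminal} bound lies below the \emph{running} bound), and every non-crossing subgraph of $E\cup T$ has total length at most $\|E\|+\|T\|=\|E\|+\|\MST(V)\|$. As the final object I would take a simple polygon $P$ on $V$ with $\|P\|\le 2\|T\|$, obtained in the standard way: double every edge of $T$ to get the facial walk $W$ of the single face of $T$, so $\|W\|=2\|T\|$; shortcut $W$ to a Hamiltonian cycle (each straight shortcut only shortens, by the triangle inequality, so the cycle has length at most $2\|T\|$); and then repeatedly \emph{uncross} any two crossing edges $ab,cd$ by the $2$-opt swap to $ac,bd$, which keeps a single Hamiltonian cycle and strictly decreases the length. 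As there are finitely many Hamiltonian cycles this terminates at a \emph{simple} polygon $P$ with $\|P\|\le 2\|\MST(V)\|$.

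\textbf{Phase I} transforms $G$ into $T$ through connected PSLGs whose edge sets remain inside $E\cup T$, so that the running budget $\|E\|+\|\MST(V)\|$ holds automatically. First I would delete the edges of $G$ lying on cycles, one at a time, until a spanning tree $T_0\subseteq E$ remains: each deletion preserves connectivity and planarity and only decreases weight. Then I would morph $T_0$ into $T$ by edge swaps, each removing an edge of $E\setminus T$ and inserting an edge of $T$, keeping a non-crossing spanning tree at every step. Existence of \emph{some} such sequence follows from the connectivity of the flip graph of non-crossing spanning trees on $V$; the delicate point is to realize it using only edges of $E\cup T$. Here I would use that when an inserted MST edge $f=xy$ crosses current edges, those crossing edges must lie in $E\setminus T$ (two edges of the plane tree $T$ never cross), and clear them first, allowing the intermediate graph to pass temporarily through connected graphs \emph{with} cycles; the weight is still controlled because the edge set stays a subset of $E\cup T$.

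\textbf{Phase II} transforms $T$ into $P$ while keeping the weight at most $2\|T\|\le\|E\|+\|\MST(V)\|$. Since a connected graph on $V$ needs only $n-1$ edges and $P$ has $n$, this is essentially a swap-heavy process: I would insert the edges of $P\setminus T$ and delete the edges of $T\setminus P$, interleaved so that each inserted polygon edge is paired with the deletion of a tree edge that has just become redundant (a non-bridge). Insertions preserve planarity and connectivity, deletions of non-bridges preserve connectivity, and by alternating the two the running weight stays between $\|T\|$ and roughly $2\|T\|$, hence within budget; at termination the graph is exactly the simple cycle $P$ with $\|P\|\le 2\|\MST(V)\|$. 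Controlling crossings during these insertions (so that every intermediate graph is a genuine PSLG) is where I would again invoke the geodesic machinery of Lemmas~\ref{lem:1} and~\ref{lem:1+} to reroute an offending segment without increasing its length.

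For optimality I would exhibit two tight families. For the terminal bound, place $n$ points in near-collinear zig-zag position along a unit segment (perturbed so that no three are collinear): then $\|\MST(V)\|\to 1$, while every simple Hamiltonian polygon must return to its start and has length $\to 2$, so $\|P\|/\|\MST(V)\|\to 2$. For the running bound, I would build a configuration in which $\|E\|$ is arbitrarily larger than $\|\MST(V)\|$ yet each edge of $E$ is forced to persist until a competing structure of weight close to $\|\MST(V)\|$ is in place (otherwise connectivity and planarity fail), so that some stage attains weight $\|E\|+\|\MST(V)\|$; a nested analogue of the gadget in Figure~\ref{fig:3} should do. The hard part, and the main obstacle throughout, is the \emph{budget-respecting planar morph}: proving that at every stage of Phases I and II one can make a legal move that preserves connectivity and non-crossing-ness while keeping the total length at most $\|E\|+\|\MST(V)\|$. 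Neither the matroid-exchange order nor the abstract flip-graph connectivity respects planarity or the $E\cup T$ restriction for free, so a careful crossing-clearing order combined with the geodesic lemmas is what I expect to require the real work.
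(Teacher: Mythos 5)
Your Phase~I contains a fatal flaw, not merely a deferred difficulty: the morph from a spanning tree $T_0\subseteq E$ to $T=\MST(V)$ through connected PSLGs whose edge sets stay inside $E\cup T$ does not exist in general, so no ``careful crossing-clearing order'' can produce it. Concretely, take $a=(0,10)$, $b=(0,-10)$, $c=(-1,\,0.1)$, $d=(1,\,-0.1)$ and $E=\{ab,ac,bd\}$, a non-crossing path $c$--$a$--$b$--$d$. The unique minimum spanning tree is $T=\{cd,ac,bd\}$, and its edge $cd$ crosses $ab$ at the origin. Here $G$ is itself a spanning tree, so every edge is a bridge and no deletion preserves connectivity; and the only edge of $E\cup T$ not already present is $cd$, whose insertion creates a crossing. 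Hence not a single legal move exists inside $E\cup T$: your Phase~I cannot even start, and ``passing temporarily through connected graphs with cycles'' is impossible because no cycle can be formed. Any correct argument must use auxiliary edges outside $E\cup\MST(V)$ (here $ad$ or $bc$), at which point your budget argument (``the edge set stays a subset of $E\cup T$'') collapses. This is exactly why the paper routes through triangulations: it maintains a spanning tree while performing Delaunay (Lawson) flips, replacing a flipped tree edge by a strictly shorter edge of the new triangle, so the tree weight decreases monotonically and each intermediate graph exceeds the current tree weight by one edge of length at most $\diam(V)\le\|\MST(V)\|$ --- a completely different accounting that never confines itself to $E\cup T$.

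Phase~II inherits the same obstruction and adds a budget problem. The polygon $P$ you fix in advance (doubled MST, shortcutting, 2-opt uncrossing) can have edges crossing edges of $T$, so the paired insert/delete schedule can again be blocked; and even when it is not, inserting edges of total length $\|P\|\le 2\|T\|$ interleaved with deletions can push the intermediate weight toward $\|T\|+\|P\|\approx 3\|T\|$, exceeding the budget $\|E\|+\|T\|$ whenever $\|E\|$ is close to $\|T\|$ (e.g.\ when $G$ is itself the MST); your claim that the running weight stays near $2\|T\|$ is unsupported. Patching either issue with geodesic rerouting destroys the identity of $P$, forcing you to re-prove the terminal bound for whatever cycle you actually produce --- at which point you have essentially reinvented the paper's Phases~4--5, which never fix a target polygon but instead grow a weakly simple polygon from one hull edge by repeatedly replacing a cycle edge $x'y$ with $\geo(x',y,z')\cup z'y$ (the triangle inequality keeps the total at most $2\|\MST(V)\|$), and then remove repeated vertices by geodesic shortcuts. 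Finally, your running-bound optimality argument is only a gesture (``a nested analogue \dots\ should do''); note that the paper's four-point gadget of Figure~\ref{fig:3} already settles it: there no edge can be deleted, every insertable edge has length $1$, so the second graph of any valid sequence already has weight tending to $\|E\|+\|\MST(V)\|$.
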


\begin{proof}
We prove the upper bound constructively. We construct and analyze the sequence of edge insertion and deletion operations in five phases described below.

\paragraph{Phase~1: From $G$ to a tree.}
Let $G_1=(V,E_1)$ be an arbitrary spanning tree of $G=(V,E)$.
Successively delete the edges in $E\setminus E_1$ in an arbitrary order.
The graph remains connected, and its edge length bounded by $\|E_1\|\leq \|E\|$.

\paragraph{Phase~2: From an arbitrary tree to a tree in the Delaunay triangulation.}
Triangulate $G_1$ arbitrarily, and denote the triangulation by $T_1$.

Lawson~\cite{Law72} showed that every triangulation of a point set $V$ can be transformed into the Delaunay triangulation of $V$, denoted $DT(V)$, through a sequence of edge flips. If an edge $ab$ in a triangulation is adjacent to two triangles, $\Delta abc$ and $\Delta adb$, where $\{a,b,c,d\}$ is in convex position, then an \emph{edge flip} replaces edge $ab$ by a new edge $cd$, and produces a new triangulation on $V$. Lawson~\cite{Law77} also showed that there is a sequence of so-called \emph{Delaunay flips}, where point $d$ lies in the interior of the circumcircle of $\Delta(abc)$.
The length of the flip sequence is $O(n^2)$; this bound is the best possible~\cite{Law77,HNU11},
but finding the minimum length of flip-sequence between two triangulations is APX-Hard~\cite{Pil14}.


Let $T_1,\ldots , T_m$ be a sequence of triangulations on $V$, where $T_m=DT(V)$ and $T_i$ is obtained from $T_{i-1}$ by an edge flip described above for $i=2,\ldots, m$.
We now describe how to construct a sequence of spanning trees $G_1,\ldots, G_m$, where $G_i$ is a spanning tree in the triangulation $T_i$; and $G_i$ is obtained from $G_{i-1}$ by an edge insertion followed by an edge deletion (such that the intermediate graph is also a connected PSLG). Suppose that $T_i$ is obtained from $T_{i-1}$ by an edge flip that replaces $ab$ by $cd$. We distinguish two cases:

\begin{enumerate}\itemsep -2pt
\item Edge $ab$ is not in $G_{i-1}=(V,E_{i-1})$. Then let $G_{i}=G_{i-1}$.
\item Edge $ab$ is in $G_{i-1}=(V,E_{i-1})$. Then let $G_{i}=(V,E_{i-1}\cup\{e\}\setminus\{ab\})$,
               where $e\in \{ac,bc,ad,bd\}$ as described below.
\end{enumerate}

Assume that $ab\in E_{i-1}$. Since edge $ab$ violates the Delaunay Condition, point $d$ lies in the interior of the circumcircle of $\Delta(abc)$. Consequently, $\measuredangle acb+\measuredangle bda > \pi$ (equality would imply that $a$, $b$, $c$, and $d$ are cocircular). Without loss of generality, assume $\measuredangle acb>\pi/2$, that is $\bigtriangleup abc$ is an obtuse triangle. By the sine theorem, $ab$ is the longest side of $\bigtriangleup abc$, and so $\max(\|ac\|,\|bc\|) < \|ab\|$.

The deletion of edge $ab$ disconnects the spanning tree $G_{i-1}$ into two trees in which $a$ and $b$ are in different components. Vertex $c$ is in one of the two components. If $c$ and $a$ are in the same component, then $c$ and $b$ are in different components and $G_{i}:=(V,E_{i-1}\cup\{bc\}\setminus\{ab\})$ is a spanning tree. Otherwise $c$ and $a$ are in different components, and  $G_{i}:=(V,E_{i-1}\cup\{ac\}\setminus\{ab\})$ is a spanning tree. Since the edge $ab$ was replaced by a shorter edge, $ac$ or $bc$, we have $\|E_i\|<\|E_{i-1}\|$. Note also that the intermediate graph, $(V,E_{i-1}\cup\{ac\}\setminus\{ab\})$ or $(V,E_{i-1}\cup\{ac\}\setminus\{ab\})$, is connected and its length is bounded above by $\|E_{i-1}\|+\diam(V)\leq \|E\|+\|\MST(V)\|$.

\paragraph{Phase~3: From an arbitrary tree in the Delaunay triangulation to $\MST(V)$.}
To transform $G_m$ into the $\MST(V)$, we add the $\MST(V)$ edges to $G_m$ one at a time as described below.
It is well known that $DT(V)$ contains the Euclidean spanning tree $\MST(V)$ of $V$ as a subgraph.
Let $e$ be an edge in $\MST(V)$ that is not in $G_m$.
Add $e$ to $G_m$ creating a connected graph $G_m'$ whose total weight is at most
$\|E_m\|+\|e\|\le\|E_m\|+\|\MST(V)\|$.
Since $G_m$ is a spanning tree, there exist exactly one cycle in $G_m'$ and it contains $e$.
Delete a longest edge of such cycle which results in a connected graph $G_{i+1}$ that weighs
at most $\|E_m\|$. By repeating this procedure at most $|V|$ times, we obtain $G_m^{(k)}=\MST(V)$.

\paragraph{Phase~4: From $\MST(V)$ to a weakly simple polygon $C$.}
Given a graph $G_0=\MST(V)$, pick an arbitrary edge $uv$ of the convex hull of $V$ that is not present in $G_0$.
Let $p$ be the unique path between $u$ and $v$ in $G_0$, where $\|uv\|\leq \|p\|$ by the triangle inequality.
Augment $G_0$ with the edge $uv$ into a PSLG $G_1$, and let $C_1$ be the planar straight-line cycle formed by $uv$ and $P$.

We apply a sequence of edge insertion and deletions to $G_1$. In each step $i$, we maintain a PSLG $G_i=(V,E_i)$, a weakly simple polygon $C_i=(V(C_i),E(C_i))$ whose edges are contained in $G_i$, and an ordering among the multi-edges of $C_i$ (any multi-edge of $C_i$ is present as a single edge in $G_i$), such that all edges induced by $V(C_i)$ are in $E(C_i)$ and $\|E(C_i)\| + \|E_i\setminus E(C_i)\|\leq 2\|E_0\|$  (where the weight of any multi-edges of $C_i$ are counted with multiplicity, all other edges in $E_i$ are counted once). The vertex set $V(C_i)$ will grow monotonically until $V(C_i)=V$.

Given $G_i$ and $C_i$, we construct $G_{i+1}$ and $C_{i+1}$ as follows. Assume that $V(C_i)\neq V$.
Since $G_i$ is connected, there is a vertex $y$ in $C_i$ adjacent to some vertex outside of $C_i$.
In the counterclockwise order of edges incident to $y$ there must exist at least two transitions between edges to
vertices in $V(C_i)$ and not in $V(C_i)$, that is, a pair of consecutive edges $xy$ and $yz$ such that $x\in V(C_i)$ and $z\notin V(C_i)$,i.e.,  $xy\notin V(C_i)$ and $yz\in V(C_i)$. Since no three vertices are collinear, at least one of such pairs forms a convex walk $(x',y,z')$. Without loss of generality, assume that $x'\in V(C_i)$ and $z'\notin V(C_i)$.
An example is shown in Fig.~\ref{fig:phase4-5}(left).

Construct $C_{i+1}$ from $C_i$ by replacing edge $x'y$ with the path $\geo(x',y,z')\cup z'y$.
Similarly, we construct $G_{i+1}$ from $G_i$ by adding the geodesic path $\geo(x',y,z')$
(if an edge of $\geo(x',y,z')$ is already present in $G_i$, we increment its multiplicity by one),
and then deleting (one copy of) the edge $x'y$, and any other edges that are induced by $V(C_{i+1})$
but not present in $E(C_{i+1})$.

By the triangle inequality we have $\|\geo(x',y,z')\|\le \|x'y\|+\|yz'\|$.
We then have $\|E(C_{i+1})\| + \|E_{i+1}\setminus E(C_{i+1})\|\leq \|E(C_i)\| + \|E_i\setminus E(C_i)\|$,
since we added $\geo(x',y,z')$ and removed (one copy of) $x'y$.
Each step strictly increased the number of vertices in $V(C_i)$.
Consequently, Phase~4 executes at most $|V|-1$ times and the resulting graph
$G_k=(V,E_k)$ contains the weakly simple polygon $C_k$ $\|E(C_k) \|\le 2\|\MST(V)\|$.


\begin{figure}[h]
\centering
\includegraphics[width=0.95\columnwidth]{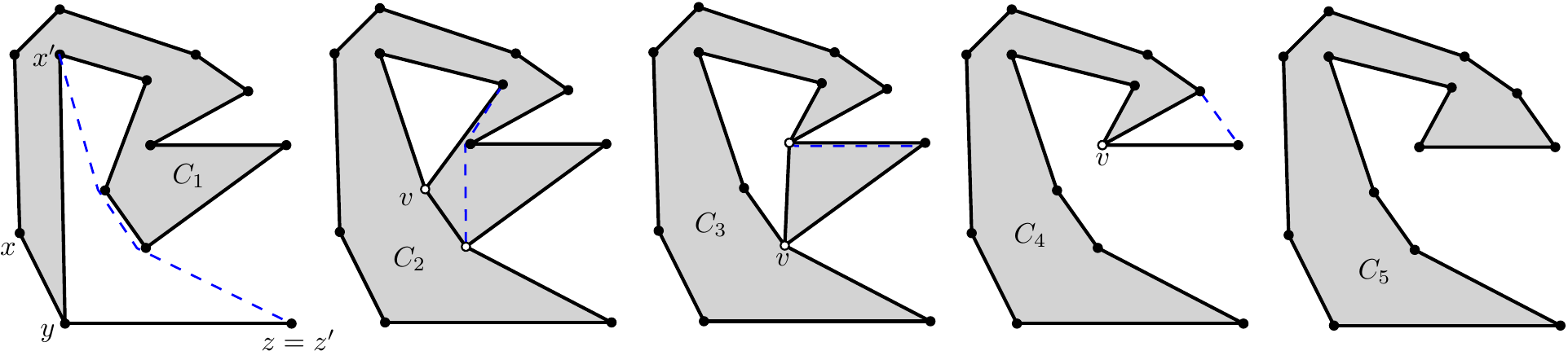}
\caption{Phases~4 and5 of our algorithm transforms a planar straight-line tree with 5 vertices (left) into a planar straight-line cycle (right). Dashed lines indicate path $(u,v,w)$ in the initial tree, and the paths $(x',y,z')$ in subsequent iterations. Empty circles indicate vertices of multiplicity two or higher in $C_i$.}
\label{fig:phase4-5}
\end{figure}

\paragraph{Phase~5: From a weakly simple polygon $C$ to a simple polygon $P$.}
Let $C_0$ be a weakly simple polygon on vertex set $V(C_0)=V$ and edge multiset $E(C_0)$. (Since no three vertices are collinear, no vertex lies in the relative interior of an edge). We construct a sequence of edge insertions and deletions that transform $C_0$ into a simple polygon on $V$, while monotonically decreasing the total edge length (counted with multiplicity) of intermediate weakly simple polygons $C_i$ until $C_i$ becomes Hamiltonian. This monotonicity ensures that the algorithm terminates.

We describe a generic step of the algorithm, where $C_i=(p_0,\ldots ,p_t)$, $p_t=p_0$, is a weakly simple polygon with vertex set $V(C_i)=V$. If every vertex has multiplicity one, then $C_i$ is a simple polygon, and our algorithm is complete. Otherwise, let $v\in V$ be a vertex of multiplicity two or higher. See Fig.~\ref{fig:phase4-5} for examples.
Since no three vertices are collinear, any two consecutive edges of $C_i$ define a convex and a concave angle $\angle p_{j-1} p_j p_{j+1}$ and $\angle p_{j+1} p_j p_{j-1}$.
Let $j$, $1\leq j\leq t$, be an index such that $p_j=v$ and $\min(\angle p_{j-1} p_j p_{j+1}, \angle p_{j+1} p_j p_{j-1})$ is minimal. Construct $C_{i+1}$ from $C_i$ by replacing the walk $(p_{j-1},p_j,p_{j+1})$ with $\geo(p_{j-1},p_j,p_{j+1})$.
The minimality of the angle guarantees that $C_{i+1}$ is a weakly simply polygon, and the triangle inequality yields
$\|\geo(p_{j-1},p_j,p_{j+1})\|<\|p_{j-1}p_j\|+\|p_jp_{j+1}\|$, as required.

\paragraph{Optimality.}
We now show our lower bounds, proving that the upper bound $2\|\MST(V)\|$ on the length of the output and $\|E\|+\|\MST(V)\|$ on the length of all intermediate graphs are the best possible.
For every $\varepsilon>0$, let $G_\varepsilon=(V,E)$ be a graph with 4 vertices $p_1=(0,0)$, $p_2=(0,\varepsilon)$, $p_3=(1,0)$, and $p_4=(1,\varepsilon)$, and edge set $E=\{p_1p_2, p_2p_3,p_3p_4\}$; depicted in Figure \ref{fig:3}. We have $\|E\| = \sqrt{1+\varepsilon^2}+2\varepsilon$. An $\MST$ of these four points is the path $(p_1,p_2,p_4,p_3)$, with $\|\MST(V)\|=1+2\varepsilon$.

The graph $G_\varepsilon$ is a path, which is not 2-connected. No edge of $G_\varepsilon$ can be deleted without disconnecting the graph, and only $p_1p_3$ or $p_2p_4$ can be inserted without introducing crossings, each of which has length 1. Hence the length of the 2nd graph in the sequence leading to 2-connectivity is $1+\sqrt{1+\varepsilon^2}+2\varepsilon$, which tends to $\|E\|+\|\MST(V)\|$ when $\varepsilon$ goes to 0.

Every 2-connected graph $G'=(V,E')$ contains at least two edges between $\{p_1,p_2\}$ and $\{p_3,p_4\}$, and the length of any edge between these vertex sets is at least 1. If $G'$ contains exactly two edges between $\{p_1,p_2\}$ and $\{p_3,p_4\}$, then it must contain the edges $p_1p_2$ and $p_3p_4$. Consequently, every 2-connected graph $G'=(V,E')$ satisfies $\|E'\|\geq 2+2\varepsilon$, and this bound is attained for the cycle $(p_1,p_2,p_4,p_3)$. We have $\lim_{\varepsilon \rightarrow 0}\|E'\|/\|E\|\geq 2$, and the ratio $\|E'\|/\|E\|\leq 2$ is the best possible.
\end{proof}


\begin{thebibliography}{99}

\bibitem{AGF+08}
M.~Abellanas, A.~Garc\'{\i}a, F.~Hurtado, J.~Tejel, and J.~Urrutia,
Augmenting the connectivity of geometric graphs,
\emph{Comput. Geom.} \textbf{40} (2008), 220--230.


\bibitem{ACL+15}
H.~A. Akitaya, J.~Castello, Y.~Lahoda, A.~Rounds, and C.~D. T\'oth,
Augmenting planar straight line graphs to 2-edge-connectivity (poster),
in \emph{Proc. 23rd
GD},
LNCS 9411, Springer, 2015, pp. 563--564.

\bibitem{Bes03}
S. Bespamyatnikh,
Computing homotopic shortest paths in the plane,
\emph{J.~Algorithms} {\bf 49} (2003), 284--303.

\bibitem{BK12}
Mark de~Berg and A. Khosravi.
\newblock Optimal binary space partitions for segments in the plane.
\newblock {\em Internat. J. Comput. Geom. Appl.}, \textbf{22}(3) (2012):187--205.


\bibitem{CW15}
D.~Z. Chen and H. Wang,
A new algorithm for computing visibility graphs of polygonal obstacles in the plane,
\emph{J. Comput. Geom.} \textbf{6} (2015), 316--345.

\bibitem{DKK+12}
S.~Dobrev, E.~Kranakis, D.~Krizanc, O.~Morales Ponce, and L.~Stacho,
Approximating the edge length of 2-edge connected planar geometric graphs on a set of points.
in \emph{Proc. 10th
LATIN},
LNCS 7256, Springer, 2012, pp.~255--266.

\bibitem{EKL06}
A.~Efrat, S.~G. Kobourov, and A.~Lubiw,
Computing homotopic shortest paths efficiently,
\emph{Comput. Geom. } {\bf 35} (2006), 162--172.

\bibitem{Fra11}
A. Frank,
\emph{Connections in Combinatorial Optimization},
vol.~38 of \emph{Oxford Lecture Series in Math. Appl.},
Oxford University Press, 2011.

\bibitem{FJ81}
G.~N. Frederickson and J. Ja'Ja',
Approximation algorithms for several graph augmentation problems,
\emph{SIAM J. Comput.} {\bf 10} (1981), 270--283.

\bibitem{GMZ09}
C. Gutwenger, P. Mutzel, and B. Zey,
Planar biconnectivity augmentation with fixed embedding,
In \emph{Proc. 20th IWOCA}, LNCS~5874, Springer, 2009, pp.~289--300.

\bibitem{HS94}
J. Hershberger and J. Snoeyink,
Computing minimum length paths of a given homotopy class,
\emph{Comput. Geom.} {\bf 4} (1994), 63--98.

\bibitem{HT13}
F.~Hurtado and C.~D. T\'oth,
Plane geometric graph augmentation: a generic perspective,
in \emph{Thirty Essays on Geometric Graph Theory (J. Pach, ed.)},
Springer, 2013, pp. 327--354.

\bibitem{HNU11}
F. Hurtado, M. Noy, and J. Urrutia,
Flipping edges in triangulations,
\emph{Discrete Comput. Geom.} {\bf 22} (1999), 333--346.

\bibitem{KB91}
G.~Kant and H.L.~Bodlaender,
Planar graph augmentation problems,
in \emph{Proc. 2nd WADS}, LNCS~519, Springer, 1991, pp.~286–-298.

\bibitem{KZ16}
G. Kortsarz and Z. Nutov,
A simplified 1.5-approximation algorithm for augmenting edge-connectivity of a graph from 1 to 2,
\emph{ACM Trans. Algorithms} {\bf 12} (2016), article 23.

\bibitem{KKM+12}
E.~Kranakis, D.~Krizanc, O.~Morales Ponce, and L.~Stacho,
Bounded length, 2-edge augmentation of geometric planar graphs,
\emph{Discrete Math. Alg. Appl.} {\bf 4} (2012).

\bibitem{Law72}
C. L. Lawson,
Transforming triangulations,
\emph{Discrete Math.} {\bf 3} (4) (1972), 365--372.

\bibitem{Law77}
C.~L. Lawson,
Software for $C^1$ surface interpolation,
in J.R. Rice (ed.), \emph{Mathematical Software III},
Academic Press, 1977, pp.~161--194.

\bibitem{MV15}
D. Marx and L.~A. V\'egh,
Fixed-parameter algorithms for minimum-cost edge-connectivity augmentation,
\emph{ACM Trans. Algorithms} {\bf 11} (2015), article~27.

\bibitem{OW88}
M. H. Overmars and E. Welzl,
New methods for computing visibility graphs,
in \emph{Proc. 14th Sympos. Comput. Geom.}, ACM Press, 1988, pp.~164--171.

\bibitem{Pil14}
A. Pilz,
Flip distance between triangulations of a planar point set is APX-hard,
\emph{Comput. Geom.} {\bf 47} (2014), 589--604.

\bibitem{RW12}
I.~Rutter and A.~Wolff,
Augmenting the connectivity of planar and geometric graphs,
\emph{J. Graph Algorithms Appl.} {\bf 16} (2012), 599--628.

\bibitem{Tot12}
C.~D.~T\'oth,
Connectivity augmentation in planar straight line graphs,
\emph{European J. Combin.} {\bf 33} (2012), 408--425.

\bibitem{Veg11}
L.~A. V\'egh,
Augmenting undirected node-connectivity by one,
\emph{SIAM J. Discrete Math.} {\bf 25} (2011), 695--718.

\end{thebibliography}
\end{document}